  \newcommand{\pgwrapper}[2]{\textred{#1: #2}}
  \newcommand{\pgwrapper}[2]{}
\newtheorem{remark}{Remark}
\newtheorem{proposition}{Proposition}
\newtheorem{lemma}{Lemma}
\newtheorem{theorem}{Theorem}
\newcommand{\bfX}{\mathbf{X}}
\newcommand{\bfY}{\mathbf{Y}}
\newcommand{\bfy}{\mathbf{y}}
\newcommand{\bfZ}{\mathbf{Z}}
\newcommand{\bfW}{\mathbf{W}}
\newcommand{\bfw}{\mathbf{w}}
\newcommand{\CAP}{\mathsf{C}}
\newcommand{\snr}{\mathsf{S}}
\newcommand{\inr}{\mathsf{I}}
\newcommand{\Fopt}{F_{\widetilde{X}}^{\text{opt}}}
\newcommand{\xb}{{x+\inr+2\sqrt{x \inr} \cos \theta }}
\newcommand{\dF}{{\ dF_{\widetilde{X}}(x) }}
\newcommand{\dy}{{\ dy }}
\newcommand{\kernel}{{e^{-(y+\xb)}I_0\left( 2\sqrt{y}\sqrt{\xb}\right) }}
\begin{document}

\title{On the Capacity of the AWGN Channel with Additive Radar Interference}
\author{
\IEEEauthorblockN{Sara Shahi, Daniela Tuninetti, and Natasha Devroye\\}
\IEEEauthorblockA{%
ECE, University of Illinois at Chicago, Chicago IL 60607, USA.\\
Email: {\tt sshahi7, danielat, devroye @uic.edu}}%
}

\maketitle

\begin{abstract}
This paper investigates the capacity of a communications channel that, in addition to additive white Gaussian noise, also suffers from interference caused by a co-existing radar transmission. 
The radar interference (of short duty-cycle and of much wider bandwidth than the intended communication signal) is modeled as an additive term whose amplitude is known and constant, but whose phase is independent and identically uniformly distributed at each channel use. The capacity achieving input distribution, under the standard average power constraint, is shown to have independent modulo and phase.
The phase is uniformly distributed in $[0,2\pi]$.
The modulo is discrete with countably infinitly many mass points, but only finitely many in any bounded interval.
From numerical evaluations, a proper-complex Gaussian input is seen to perform quite well for weak radar interference.
%
We also show that for very large radar interference, capacity is equal to $\frac{1}{2}\log\left(1+\snr \right)$  and 
a proper-complex Gaussian input achieves it.
It is concluded that the presence of the radar interference results in a loss of half of the degrees of freedom compared to an AWGN  channel without radar interference. 
\end{abstract}

\section{Introduction}
\label{sec:intro}
Shortage of spectrum resources, coupled with the ever increasing demand for commercial services, necessitates a more sensible bandwidth allocation policy. In 2012, the President's council of Advisors on Science and Technology  published a report 
that recommended the release of portions of government radar bands (e.g., 3550-3700 MHz) to be shared with commercial wireless services. A new Citizens Broadband Radio Service (CBRS) for shared wireless broadband in the  3550-3700 MHz band has also been established in 2015. Since then, several national funding agencies 
have launched 
research programs to encourage research in this area.

To understand how these two very different systems should  best share the spectrum, it is useful to have an idea of the fundamental information theoretic performance limits of channels in which two systems co-exist.
In this work, the capacity of a white additive Gaussian noise communications channel, which in addition to noise, suffers interference from a radar transmission, is studied. This extends the authors' prior work in ~\cite{shahi-allerton}. 
In the channel model considered, the interfering radar transmission is modeled to be additive, but not Gaussian. Rather, it is modeled as a constant (thus known) amplitude signal, but with unknown and uniformly distributed phase at each channel use. A detailed justification of this model can be found in~\cite{Salim_Conf_RadarPMF_2016}.
%
The capacity of an additive Gaussian noise channel under an average power constraint is well known: the optimal input is Gaussian of power equal to the power constraint.
However, since the channel studied here is no longer Gaussian, several questions emerge: 
(i) what is the capacity of this channel and how does it differ from that of a Gaussian noise channel (without the radar interference), and 
(ii) what 
input achieves the capacity. In this paper we aim to address both these questions. 
\subsection{Past Work} 
The capacity of channels with additive noise under various input constraints has been studied. 

In~\cite{ihara} Ihara bounds the capacity of additive, but not necessarily Gaussian, noise channels. Applying Ihara's upper bound
to our channel model yields a bound that grows 
as the radar signal amplitude increases. This bound is not tight because the capacity of our channel is upper bounded by the capacity of the classical power-contained Gaussian noise channel without radar interference.

In~\cite[Theorem~1]{1302319}, it was shown that for any memoryless additive noise channel with a second moment/power constraint on the input, the 
rate achievable by using a white Gaussian input never incurs a loss of more than half a bit per real dimension with respect to the capacity. This implies that one can obtain a capacity upper bound for a complex-valued additive noise channel by adding 1~bit to the rate attained with a proper-complex Gaussian input for the same channel. In our channel model however, we show that in the large radar interference regime, the capacity is indeed achieved by a proper-complex Gaussian input and hence the bound given in~\cite[Theorem~1]{1302319} is loose for large radar interference. 

In the seminal work by Smith~\cite{smith}, 
it was shown that the capacity of a real-valued white Gaussian noise channel with peak~amplitude and average~power constraints is achieved by a discrete input with a finite number of mass points. This is in sharp contrast to the Gaussian input that achieves the capacity when the amplitude constraint is dropped. 
In~\cite{bar-david}, the authors also considered complex-valued Gaussian noise with average~power and peak~amplitude constraints and derived the  optimal input distribution characteristics. In particular, they showed that the capacity of the complex-valued Gaussian noise channel under average~power and peak~amplitude constraints is achieved by a complex-valued input with independent amplitude and phase; the optimal phase is uniformly distributed in the interval $[0,2\pi]$, and the optimal amplitude is discrete with finitely many mass points. 

Later, 
the optimality of a discrete input under peak~amplitude constraint was shown to hold for a wide class of real-valued additive noise channels~\cite{aslan}. As for the complex-valued additive channels, \cite{meyn}  showed that for  certain additive complex-valued channels with average~power and peak~amplitude constraints, the optimal input has discrete modulo.
%
Moreover, recently it was shown in~~\cite{faycal}, that,
under an average~power constraint and certain `smoothness' conditions on the noise distribution, 
the only real-valued additive noise channel whose capacity achieving input is continuous is the Gaussian noise channel.

The model considered in this paper is a complex-valued additive noise channel with an average~power constraint. 
When we transform the mutual information optimization problem over a bivariate (modulo and phase) input distribution into one over a univariate (modulo only) input distribution, the equivalent channel (i.e., the channel  kernel $K(x,y)$, which is formally defined in~\eqref{eq:kernel def}) is no longer additive. For this equivalent non-additive channel, we can \textit{not} proceed as per the steps preceding~\cite[eq.(4)]{faycal}. This is so, because~\cite[eq.(4)]{faycal}, heavily relies on certain integrals being convolution integrals and thus passing to the Fourier domain to study/infer certain properties of the optimal input distribution. In non-additive channels this is not possible.

In this respect, 
our approach is similar to that of~\cite{bar-david} and we closely follow the steps within that. In particular, the trick used in~\cite{bar-david} to reduce the two dimensional optimization problem into a one dimensional optimization problem helps us to avoid the use of an identity theorem for multiple dimensions. In fact, it was shown in~\cite{boche} that the application of an identity theorem in several variables is not just a straightforward generalization of one variable. 
%
%
%
%


Extensions of Smith's work~\cite{smith} to Gaussian channels with various fading models, possibly MIMO, are known in the literature but are not reported here because they are not directly relevant.

In~\cite{NN-RadarConf2016,NN-Globecom2016} a subset of the authors studied the uncoded average symbol error rate performance of the same 
channel model considered here. Two regimes of operation emerged.
In the low Interference to Noise Ratio (INR) regime, it was shown that the optimal decoder is 
a minimum Euclidean distance decoder, as for Gaussian noise only;
while in the high INR regime, radar interference estimation and cancellation is optimal. Interestingly, 
in the process of canceling the radar interference at high INR, part of the useful signal is also removed, and after cancellation the equivalent output is real-valued (one of the two real-valued dimensions of the original complex-valued output is lost). We shall observe the similar `half degrees of freedom' loss for the capacity of this channel. 
%
%
%
\subsection{
Contributions} 
The capacity of the  channel model proposed here has not, to the best of our knowledge, been studied before and provides a new model for bounding the fundamental limits of a communication system in the presence of radar interference. 
Likewise, in the literature on the co-existence of radar and communications channels, we are not aware of any 
capacity results. 
Our contributions thus lie 
in the study of the capacity of this novel channel model, in which we show that the optimal input distribution has independent modulo and phase.
The phase is uniformly distributed in $[0,2\pi]$.
The modulo is discrete with countably infinite many mass points, but only finitely many in any bounded interval.

By upper bounding the output entropy by the cross entropy of the output distribution and an output distribution induced by Gaussian input, we show that very high radar interference results in a loss of half the degrees of freedom compared to an interference-free channel and that a Gaussian input is optimal in the high interference regime.

We also show achievable rates.
The Gaussian input is seen to perform very well for weak radar interference, where it closely follows the upper bound in~\cite{ihara}. 
%
%
%
  We numerically find some suboptimal  inputs for the regime $0<\alpha:=\frac{\inr_{(\text{dB})}}{\snr_{(\text{dB})}}<2$, where $\inr$ and $\snr$ are the interference and signal to noise ratios respectively,  which  perform better than the Gaussian input.

\subsection{
Paper organization} 
The paper is organized as follows.
Section~\ref{sec:model} introduces the channel model. 
Section~\ref{subsec:main} derives our main result.
Section~\ref{sec: high inr} finds the capacity for large INR regime.
Section~\ref{sec:num} provides numerical results.
Section~\ref{sec:concl} concludes the paper.
Proofs can be found in the Appendix.


\section{Problem formulation and preliminary results}
\label{sec:model}

Next, boldface letters indicate complex-valued random variables, while
lightface letters real-valued ones. Capital letters represent the random variables and the lower case letters represent their realization. In addition, $\widetilde{X}$ and $\angle \bfX$ are respectively used to indicate the modulus squared, $|\bfX|^2$, and phase of the complex-valued random variable $\bfX$.  We also define the following notation 
\begin{align*}
\mathbb{R}_+&:=\{x: x\geq 0\},\\
\mathbb{C}_+&:=\{z:z\in \mathbb{C}, \ \Re(z)>0\},
\end{align*}
for the set of non-negative real line and right half plane in the complex domain, respectively.
\subsection{System model}
We model the effect of a high power, short duty cycle radar pulse at the receiver of a narrowband data communication system as 
\begin{align}
 \bfY &=\bfX+\bfW,
\label{eq:additive ch output}
 \\
 \bfW &=\sqrt{\inr}e^{j\Theta_{\inr}}+\bfZ,
\label{eq:additive ch noise}
\end{align}
where 
$\bfY$ is the channel output, 
$\bfX$ is the input signal subject to the average power constraint $\mathbb{E}[|\bfX|^2]\leq \snr$, 
$\Theta_{\inr}$ is the random phase of the radar interference uniformly distributed in $[0,2\pi]$, and 
$\bfZ$ is a zero-mean proper-complex unit-variance Gaussian noise. 
The random variables $(\bfX,\Theta_{\inr},\bfZ)$ are mutually independent. 
$\Theta_{\inr}$ and $\bfZ$ are independent and identically distributed over the channel uses, that is, the channel is memoryless. 
Our normalizations imply that $\snr$ is the average Signal to Noise Ratio (SNR) 
while $\inr$ is the average Interference to Noise Ratio (INR). 
We assume $\inr$ to be fixed 
and thus known. 
For later use, the distribution of the additive noise in~\eqref{eq:additive ch noise} is given by
\begin{align}
f_\bfW(\bfw) 
  &= \mathbb{E}_{\Theta_{\inr}}
  \left[ \frac{e^{-|\bfw-\sqrt{\inr} e^{j\Theta_{\inr}}|^2}}{\pi} \right]
= \frac{e^{-|\bfw|^2-\inr}}{\pi} I_0\left(2\sqrt{\inr|\bfw|^2}\right),
\label{eq:noise pdf in w}
\end{align}
where $I_0(\bfw)=\mathbb{E}[e^{\bfw\cos(\Theta_{\inr})}]\in[1,e^{|\bfw|}]$ for
$\bfw\in\mathbb{C}$ is the zero-order modified Bessel function of the first kind.
The channel transition probability is thus 
\begin{align}
f_{\bfY|\bfX}(\bfy|\mathbf{x})
=f_{\bfW}(\bfy-\mathbf{x}), \quad (\mathbf{x},\bfy)\in\mathbb{C}^2.
\label{eq:additive ch noise pdf}
\end{align}

\subsection{Channel Capacity} \label{sec:chcap}
Our goal is to characterize 
the capacity of the memoryless channel in~\eqref{eq:additive ch output}-\eqref{eq:additive ch noise} given by
\begin{align}
\CAP(\snr, \inr) = \sup_{F_\bfX: \mathbb{E}[|\bfX|^2]\leq \snr}I(\bfX;\bfY),
\label{eq:cap def main}
\end{align}
where $F_\bfX$ is the cumulative distribution function of  $\bfX$, and $I(\bfX;\bfY)$ denotes the mutual information between random variables $\bfX$ and $\bfY$ in \eqref{eq:additive ch output}.

We aim to show that the supremum in~\eqref{eq:cap def main} is actually attained by a {\it unique} input distribution, for which we want to derive its structural properties. Before we continue however, we rewrite the optimization for the original channel~\eqref{eq:additive ch output} (involving the real and the imaginary part of the input) in a way that allows optimization with respect to a univariate distribution only.

By following steps similar to those in~\cite[Section II.B]{bar-david}, we can show that an optimal input distribution induces $\widetilde{Y}$ 
 and $\angle{\bfY}$ independent given $\bfX$, with $\angle{\bfY}$ uniformly distributed over the interval $[0,2\pi]$; such an output distribution can be attained by the uniform distribution on $\angle{\bfX}$ and by $\angle{\bfX}$ independent of $\widetilde{X}$
 ; therefore, it is convenient for later use to denote the channel transition probability $f_{\widetilde{Y}|\widetilde{X}}(y|x)$ as the kernel $K(x,y)$ given by (see Appendix \ref{app:kernel derivation})
\begin{subequations}
\begin{align}
K(x,y)
 & :=f_{\widetilde{Y} \mid \widetilde{X}}(y|x)
\\&= \int_{|\theta|\leq\pi} \frac{e^{-\inr-\xi(\theta;x,y)}}{2\pi} 
     I_0\left(2\sqrt{\inr\ \xi(\theta;x,y)}\right) \ d\theta,    
\\
\xi(\theta;x,y)&:=y+x-2\sqrt{yx}\cos(\theta) \geq 0,  \ (y,x)\in\mathbb{R}_{+}^2.
\end{align}
\label{eq:kernel def}
\end{subequations}
Since the random variables $\widetilde{X}$ 
  and $\widetilde{Y}$
   are connected through a channel with kernel $K(x,y)$, an input distributed as $F_{\widetilde{X}}$ results in an output with probability distribution function (pdf)%
\footnote{
The pdf $f_{\widetilde{Y}}(y;F_{\widetilde{X}})$ in~\eqref{eq:output pdf def} exists since the kernel in~\eqref{eq:kernel def} is a continuous and bounded (see~\eqref{eq:kernel bounds}) function and thus integrable.
}
\begin{align}
f_{\widetilde{Y}}(y;F_{\widetilde{X}}):=\int_{x\geq 0}K(x,y) dF_{\widetilde{X}}(x),
\quad y\in\mathbb{R}_{+}.
\label{eq:output pdf def}
\end{align}
We stress the dependence of the output pdf on the input distribution $F_{\widetilde{X}}$ by adding it as an `argument' in $f_{\widetilde{Y}}(y;F_{\widetilde{X}})$. 

Finding the channel capacity in~\eqref{eq:cap def main} can thus be equivalently expressed as the following optimization over the distribution of a non-negative random variable $\widetilde{X}$
\begin{align}
\CAP(\snr, \inr) + h(|\bfW|^2)
&= \sup_{F_{\widetilde{X}}: \mathbb{E}[\widetilde{X}]\leq \snr} h(\widetilde{Y};F_{\widetilde{X}}),
\label{eq:cap def}
\end{align}
where $h(\widetilde{Y};F_{\widetilde{X}})$ is the output differential entropy given by
\footnote{
The entropy $h(\widetilde{Y};F_{\widetilde{X}})$ in~\eqref{eq:output entropy def0} exists since the output pdf in~\eqref{eq:output pdf def} is a continuous and bounded (see~\eqref{eq:output pdf bounds}) function and thus integrable.
}
\begin{align}
&h(\widetilde{Y};F_{\widetilde{X}})
   = \int_{y\geq 0} f_{\widetilde{Y}}(y;F_{\widetilde{X}}) \log\frac{1}{f_{\widetilde{Y}}(y;F_{\widetilde{X}})} \ dy.
\label{eq:output entropy def0}
\end{align}
We express $h(\widetilde{Y};F_{\widetilde{X}})$ in~\eqref{eq:output entropy def0} as
\begin{align}
h(\widetilde{Y};F_{\widetilde{X}})
&= \int_{y\geq 0} \int_{x\geq 0} 
     K(x,y)  \log\frac{1}{f_{\widetilde{Y}}(y;F_{\widetilde{X}})} \ d F_{\widetilde{X}}(x) \ dy
\notag
\\&= \int_{x\geq 0} 
     h(x; F_{\widetilde{X}})\ d F_{\widetilde{X}}(x),
\label{eq:output entropy def}
\end{align}
where we defined the {\it marginal entropy} $h(x;F_{\widetilde{X}})$ as
\footnote{
The marginal entropy $h(x; F_{\widetilde{X}})$ in~\eqref{eq:marginal output entropy def} exists since the involved functions are integrable by~\eqref{eq:kernel bounds} and~\eqref{eq:output pdf bounds}.
}
\begin{align}
h(x; F_{\widetilde{X}})
  &:= \int_{y\geq 0} \!\!\!
      K(x,y)\log\frac{1}{f_{\widetilde{Y}}(y;F_{\widetilde{X}})} \ dy, \ x\in\mathbb{R}_{+},
\label{eq:marginal output entropy def}
\end{align}
and where the order of integration in the line above~\eqref{eq:output entropy def} can be swapped by Fubini's theorem.

For later use, we note that the introduced functions can be bounded as follows:
for the kernel in~\eqref{eq:kernel def}
\begin{align}
e^{-(y+x+\inr)} \leq K(x,y) \leq 1, \quad (x, y)\in \mathbb{R}_{+}^2;
\label{eq:kernel bounds}
\end{align}
for the output pdf in~\eqref{eq:output pdf def}
\begin{align}
e^{-(y+\inr+\beta_{F_{\widetilde{X}}})}\leq f_{\widetilde{Y}}(y;F_{\widetilde{X}}) \leq 1, \quad y\in \mathbb{R}_{+},
\label{eq:output pdf bounds}
\end{align}
where $\beta_{F_{\widetilde{X}}}$ is defined and bounded (by using Jensen's inequality together with the power constraint) as
\begin{align}
0\leq \beta_{F_{\widetilde{X}}}:=-\ln\left(\int_{x\geq 0} e^{-x}dF_{\widetilde{X}}(x)\right) \leq \snr; 
\label{eq:betaF def}
\end{align}
for the marginal entropy in~\eqref{eq:marginal output entropy def}
\begin{align}
0\leq 
h(x; F_{\widetilde{X}}) 
\leq \mathbb{E}[\widetilde{Y}|\widetilde{X}=x]+\inr+\beta_{F_{\widetilde{X}}},
\quad x\in \mathbb{R}_{+},
\label{eq:marginal output entropy bounds}
\end{align}
where the conditional mean of $\widetilde{Y}$ 
given $\widetilde{X}$ 
 is 
\begin{align}
\mathbb{E}[\widetilde{Y}|\widetilde{X}=x] 
&= 
x+\inr+1, \quad x\in \mathbb{R}_{+}.
\label{eq:conditional output power}
\end{align}

\subsection{Trivial Bounds}
\label{subsec:trivial bounds}
Trivially, one can lower bound the capacity in~\eqref{eq:cap def main}
by treating the radar interference as a Gaussian noise and obtain
\begin{align}
\log\left(1+\frac{\snr}{1+\inr}\right) \leq \CAP(\snr,\inr),
\label{eq:cap trivial lower}
\end{align}
and upper bound it as
\begin{align}
\CAP(\snr,\inr) 
\leq \max_{F_\bfX: \mathbb{E}[|\bfX|^2]\leq \snr}I(\bfX;\bfY,\Theta_{\inr})
= \log\left(1+\snr\right),
\label{eq:cap trivial upper 1}
\end{align}
or 
from Ihara's work~\cite{ihara} as
\begin{align}
\CAP(\snr,\inr) \leq \log\left(\pi e(1+\snr+\inr) \right)-h(\bfW),
\label{eq:cap trivial upper 2}
\end{align}
or from Zamir and Erez's work~\cite[Theorem~1]{1302319}, as
\begin{align}
\CAP(\snr,\inr) \leq I(\bfX_G;\bfY)+\log(2),
\label{eq:cap trivial upper 3}
\end{align}
where $I(\bfX_G;\bfY)$ is the achievable rate with a proper-complex Gaussian input that meets the power constraint with equality.

We shall use these bounds later to benchmark the achievable performance in Section~\ref{sec:num}.

\section{Main Result}
\label{subsec:main}

We are now ready to state our main result: a characterization of the structural properties of the optimal input distribution in~\eqref{eq:cap def main}, in relation
to the problem in~\eqref{eq:cap def}.

\begin{theorem}
The optimal input distribution in~\eqref{eq:cap def main} is unique and has independent modulo and phase.
The phase is uniformly distributed in $[0,2\pi]$.
The modulo is discrete with countably infinite many mass points, but only finitely many in any bounded interval. 
\end{theorem}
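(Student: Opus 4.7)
The plan is to follow the classical Smith--Bar-David program adapted to the kernel $K(x,y)$ defined in~\eqref{eq:kernel def}, and is organized in four stages.

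First, I would establish that a maximizer exists and is unique, and reduce the bivariate optimization to the univariate one over $F_{\widetilde{X}}$. The feasible set $\{F_\bfX:\mathbb{E}[|\bfX|^2]\le\snr\}$ is convex and sequentially compact in the weak topology (tightness follows from Markov's inequality applied to the second-moment constraint). Using the bounds in~\eqref{eq:kernel bounds}--\eqref{eq:output pdf bounds}, the map $F_\bfX\mapsto I(\bfX;\bfY)$ is weakly continuous on this set and, since the channel noise has a fixed density, strictly concave in $F_\bfX$. Hence the supremum in~\eqref{eq:cap def main} is attained by a unique distribution. For the structure, I would exploit the rotational invariance of $f_\bfW$ (which is a direct consequence of the uniform distribution of $\Theta_{\inr}$): for any $F_\bfX$, replacing $\bfX$ by $e^{j\Psi}\bfX$ with $\Psi$ uniform in $[0,2\pi]$ independent of everything else preserves the power constraint and, by concavity and symmetry, weakly increases mutual information. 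Uniqueness then forces the optimal $\bfX$ to have uniform phase independent of $\widetilde{X}=|\bfX|^2$, and this reduces the problem to~\eqref{eq:cap def}.

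Second, I would write down the Karush--Kuhn--Tucker (KKT) conditions for the concave program~\eqref{eq:cap def}. Standard variational arguments (perturbation by $(1-\epsilon)F^{\text{opt}}_{\widetilde{X}}+\epsilon\delta_{x_0}$) yield Lagrange multipliers $\lambda\ge 0$ and $\nu$ such that the marginal entropy satisfies
\begin{align}
h(x;\Fopt)-\lambda x \le \nu\quad\text{for all }x\in\mathbb{R}_+,
\label{eq:kkt-sketch}
\end{align}
with equality on the support $E^{\text{opt}}$ of $\Fopt$. I would argue $\lambda>0$ by a simple contradiction against the trivial upper bound~\eqref{eq:cap trivial upper 1} for large $\snr$.

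Third, and this is the core technical step, I would prove discreteness by an identity-theorem argument. I would extend the kernel $K(x,y)$ to complex $x\in\mathbb{C}_+$ by writing the Bessel factor via its series $I_0(2\sqrt{u})=\sum_k u^k/(k!)^2$ so that $K(\cdot,y)$ becomes an entire function for each fixed $y$; the inner integral over $\theta$ preserves analyticity by dominated convergence, using $I_0(z)\le e^{|z|}$ together with the bound~\eqref{eq:output pdf bounds} on the denominator $f_{\widetilde{Y}}(y;\Fopt)$. I would then interchange integration and differentiation to show that $x\mapsto h(x;\Fopt)$ extends to a holomorphic function on $\mathbb{C}_+$. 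If $E^{\text{opt}}$ had an accumulation point in $[0,\infty)$, the identity theorem applied to the holomorphic function $h(x;\Fopt)-\lambda x-\nu$ would force equality in~\eqref{eq:kkt-sketch} on all of $\mathbb{R}_+$. I would then derive a contradiction by examining the asymptotics as $x\to\infty$: the left-hand side of the equality version of~\eqref{eq:kkt-sketch}, expanded using~\eqref{eq:marginal output entropy bounds} and~\eqref{eq:conditional output power}, grows at most like $(1+\beta_{\Fopt}+\inr)+x$ times slowly-varying terms, while careful estimation of $\int K(x,y)\log(1/f_{\widetilde{Y}}(y;\Fopt))\,dy$ (using the Laplace-type concentration of $K(x,\cdot)$ around $y=x+\inr$ for large $x$) will produce a mismatch with the affine function $\lambda x+\nu$, ruling out an accumulation point in any bounded subset of $\mathbb{R}_+$.

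Finally, I would exclude the possibility of a finite support. If $E^{\text{opt}}$ were finite, the output density $f_{\widetilde{Y}}(y;\Fopt)$ would be a finite mixture of kernels $K(x_i,\cdot)$, whose tail decays like $e^{-y}$ modulated by Bessel factors; I would show that the resulting $h(x;\Fopt)$ then satisfies an asymptotic expansion inconsistent with the KKT equality~\eqref{eq:kkt-sketch} holding at only finitely many mass points when $\snr$ is in the range covered by the theorem, or alternatively by showing that for any finite-support candidate one can exhibit a perturbation strictly increasing $I(\bfX;\bfY)$. Combined with the third step, this yields countably infinitely many mass points with only finitely many in any bounded interval.

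I expect the \emph{main obstacle} to be the third step: extending $h(x;\Fopt)$ analytically in $x$ requires delicate uniform integrability control because the integrand contains $\log(1/f_{\widetilde{Y}}(y;\Fopt))$, which only admits the crude linear-in-$y$ bound from~\eqref{eq:output pdf bounds}; justifying the contradiction at infinity cleanly (given that, unlike the additive Gaussian case of~\cite{smith,bar-david}, the channel kernel here is not a convolution and the Fourier techniques of~\cite{faycal} are unavailable) will require careful saddle-point estimates of $K(x,y)$ for large $x$ via the uniform asymptotics of $I_0$.
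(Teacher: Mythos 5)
Your Stages 1, 2, and the analyticity/identity-theorem setup follow the paper's outline, but there is a concrete gap in Stage 3, where you rule out an accumulation point of the support. After the identity theorem forces $h(x;\Fopt)=\lambda x+\nu$ on all of $\mathbb{R}_+$, you propose to reach a contradiction from the asymptotics of $h(x;\Fopt)$ as $x\to\infty$. But without further structure the only available bounds are
$0\le h(x;\Fopt)\le x+\inr+1+\beta_{\Fopt}$ (from~\eqref{eq:marginal output entropy bounds} and~\eqref{eq:conditional output power}), and since $\lambda\in(0,1)$ the affine function $\lambda x+\nu$ sits comfortably inside this corridor for all $x$: there is no asymptotic contradiction. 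The sharper lower bound $h(x;\Fopt)\ge x-c\sqrt{x}-\kappa$ that would produce one requires dominating $\log(1/f_{\widetilde{Y}}(y;\Fopt))$ by $y-2(\sqrt{x_{\sup}}+\sqrt{\inr})\sqrt{y}+O(1)$, which presupposes a finite essential supremum $x_{\sup}$ of the support; in the accumulation-point case the support need not be bounded, so the estimate is unavailable. The paper avoids asymptotics here entirely: it recasts the KKT condition via the identity $x-\snr=\int_{y\geq 0}\bigl(y-(1+\inr+\snr)\bigr)K(x,y)\,dy$ into $g(x,\lambda)\le\mathrm{const}$, where $g(x,\lambda)=\int_{y\geq 0}K(x,y)\log\bigl(\lambda e^{-\lambda y}/f_{\widetilde{Y}}(y;\Fopt)\bigr)dy+\mathrm{const}$. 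The identity theorem then gives $g(\cdot,\lambda)\equiv\mathrm{const}$, and the \emph{invertibility} of the integral transform $\psi\mapsto\int K(\cdot,y)\psi(y)\,dy$ (Appendix~\ref{app:invertibility}, via Hankel transforms) forces $\log\bigl(\lambda e^{-\lambda y}/f_{\widetilde{Y}}(y;\Fopt)\bigr)$ to be constant, i.e.\ $f_{\widetilde{Y}}$ exponential, i.e.\ $\bfY$ proper-complex Gaussian --- impossible in this additive model unless $\inr=0$. The $x\to\infty$ asymptotic contradiction you sketch is what the paper uses for the \emph{other} case (finitely many masses), where $x_M=\max\mathrm{supp}\,\Fopt$ is finite so the bound $h(x;\Fopt)\ge x-2(\sqrt{x_M}+\sqrt{\inr})\sqrt{x}-\kappa$ is legitimate and $\lambda<1$ finishes it; your Stage 4 gestures in this direction but misstates the target (the contradiction is with the KKT \emph{inequality} holding for all $x\in\mathbb{R}_+$, not with equality holding at finitely many points). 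A minor additional point: $K(\cdot,y)$ is not entire in $x$ (the $\sqrt{x}$ and $\sqrt{xy}$ terms introduce a branch point at the origin); the paper only establishes, and only needs, analyticity on $\mathbb{C}_+$.
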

\begin{IEEEproof}
As argued in Section~\ref{sec:chcap}, an optimal input distribution in~\eqref{eq:cap def main} has $\angle \bfX$ uniformly distributed in $[0,2\pi]$ and independent of $\widetilde{X}$. The modulo squared $\widetilde{X}$, solves the problem in~\eqref{eq:cap def}, whose supremum is attained by the {\it unique} input distribution $\Fopt$, 
because (see~\cite[Theorem 1]{faycal-shamai}):
\begin{enumerate}

\item
the space of input distributions $\mathcal{F}$ is compact and convex (see~\cite[Theorem 1]{faycal-shamai}); $\mathcal{F}$ is given by
\begin{subequations}
\begin{align}
\mathcal{F}:=
\Big\{
  & F_{\widetilde{X}}: F_{\widetilde{X}}(x)=0, \ \forall x<0, 
\label{input dist constraint:non-negativity}
\\& d F_{\widetilde{X}}(x)\geq 0, \ \forall x\geq 0, 
\label{input dist constraint:a valid distribution 1}
\\& \int_{x\geq 0} 1  \cdot d F_{\widetilde{X}}(x) =1,
\label{input dist constraint:a valid distribution 2}
\\& L(F_{\widetilde{X}}):=\int_{x\geq 0} x  \cdot d F_{\widetilde{X}}(x) - \snr\leq 0 \label{input dist constraint:power constraint}
\Big\},
\end{align}
\label{input dist constraint}
where the various constraints are:
\eqref{input dist constraint:non-negativity} for non-negativity,
\eqref{input dist constraint:a valid distribution 1} and~\eqref{input dist constraint:a valid distribution 2} for a valid input distribution, and
\eqref{input dist constraint:power constraint} for the average power constraint; and
\end{subequations}
\item
The differential entropy $h(\widetilde{Y};F_{\widetilde{X}})$ in~\eqref{eq:output entropy def} is a weak$^{\star}$ continuous (see Appendix~\ref{appendix:continuity proof}) and strictly concave (see Appendix~\ref{appendix:concavity proof}) functional of the input distribution $F_{\widetilde{X}}$.
\end{enumerate}
From this and by Smith's approach~\cite{smith}, the solution of the optimization problem in~\eqref{eq:cap def} 
is equivalent to the solution of 
\begin{subequations}
\begin{align}
h^{\prime}_{\Fopt}(\widetilde{Y};F_{\widetilde{X}})
-\lambda 
L^{\prime}_{\Fopt}(F_{\widetilde{X}})\leq 0, \text{ for all } F_{\widetilde{X}}\in \mathcal{F},
\\
\lambda\geq 0 : L(\Fopt)=0,
\end{align}
\label{eq:equivalent optimization}
\end{subequations}
where the functional $L(.)$ was defined in~\eqref{input dist constraint:power constraint},
and where the prime sign along with the subscript $\Fopt$
denotes the weak$^{\star}$ derivative of the function $h(\widetilde{Y};F_{\widetilde{X}})$ at $\Fopt$~\cite{smith}
(see Appendix~\ref{app:weak differentiability}).

The conditions in~\eqref{eq:equivalent optimization} can be equivalently expressed as the necessary and sufficient Karush-Kuhn-Tucker (KKT) condition: for some $\lambda\geq 0$
\begin{align}
h(x; \Fopt) \leq  h(\widetilde{Y}; \Fopt)
 +\lambda(x-\snr), \quad \forall x\in \mathbb{R}_+,
\label{eq:kkt}
\end{align}
where equality in~\eqref{eq:kkt} holds {\it only} at the points of increase of $\Fopt$  (see Appendix~\ref{app:kkt-equivalence}). 

At this point, as it is usual in these types of problems~\cite{smith}, the proof follows by ruling out different types of distributions. 
A distribution takes one of the following forms:
\begin{enumerate}
\item \label{continuous} 
Its support contains an infinite number of mass points in some bounded interval;
\item \label{finite} 
It is discrete with finitely many mass points; or
\item \label{remaining} 
It is discrete with countably infinitely many mass points but only a finite number of them in any bounded interval.
\end{enumerate}
Next, we will rule out cases~\ref{continuous} and~\ref{finite} by contradiction.

{\bf Rule out case~\ref{continuous}} ($\Fopt$ has an infinite number of mass points in some bounded interval). We prove that this case requires the inequality in~\eqref{eq:kkt} to hold with equality for all $x\geq 0$;  we then prove this to be impossible.

We start by stating the following proposition, the proof of which is given in Appendix~\ref{app:proof prop lambda}.
\begin{proposition}\label{prop:lambda}
The optimal Lagrange multiplier $\lambda^\text{opt}(\snr)$, 
which represents the weak$^{\star}$ derivative of the capacity $\CAP(\snr,\inr)$ with respect to $\snr$, must satisfy $0 < \lambda^\text{opt}(\snr) < 1$ for all $\snr>0$.
\end{proposition}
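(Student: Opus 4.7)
The plan is to exploit the standard convex-analysis interpretation of the Lagrange multiplier as the derivative of the value function: the proposition itself asserts that $\lambda^{\text{opt}}(\snr)$ equals the weak$^{\star}$ derivative of $\CAP(\snr,\inr)$ with respect to $\snr$, which follows from sensitivity analysis for the concave program in~\eqref{eq:cap def} together with the compactness of $\mathcal{F}$ and uniqueness of the maximizer already established in the theorem above. Once this identification is granted, both inequalities reduce to controlling how fast $\CAP(\snr,\inr)$ grows in $\snr$, which is exactly what the trivial bounds of Section~\ref{subsec:trivial bounds} provide.

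First, I would show that $\CAP(\snr,\inr)$ is nondecreasing and concave in $\snr$ for fixed $\inr$. Monotonicity follows because enlarging $\snr$ enlarges the feasible set $\mathcal{F}$ in~\eqref{input dist constraint}. Concavity follows by the standard mixture argument: if $F_1$ and $F_2$ are capacity-achieving at power levels $\snr_1$ and $\snr_2$, then $\alpha F_1 + (1-\alpha)F_2$ satisfies $\mathbb{E}[\widetilde{X}] \leq \alpha\snr_1 + (1-\alpha)\snr_2$, and by concavity of the mutual information functional in the input distribution for a fixed channel kernel $K(x,y)$, it attains rate at least $\alpha\CAP(\snr_1,\inr) + (1-\alpha)\CAP(\snr_2,\inr)$.

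For the upper bound $\lambda^{\text{opt}}(\snr) < 1$, concavity of $\CAP$ in $\snr$ together with $\CAP(0,\inr)=0$ yields the chord-slope inequality
\[
\lambda^{\text{opt}}(\snr) \;\leq\; \frac{\CAP(\snr,\inr) - \CAP(0,\inr)}{\snr} \;=\; \frac{\CAP(\snr,\inr)}{\snr}.
\]
Combining this with the trivial upper bound $\CAP(\snr,\inr) \leq \log(1+\snr)$ from~\eqref{eq:cap trivial upper 1}, and the elementary inequality $\log(1+\snr) < \snr$ valid for all $\snr>0$ under natural logarithms, gives $\lambda^{\text{opt}}(\snr) < 1$.

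For the lower bound $\lambda^{\text{opt}}(\snr) > 0$, I would argue by contradiction. If $\lambda^{\text{opt}}(\snr_0) = 0$ for some $\snr_0>0$, then by concavity the right-derivative of $\CAP$ in $\snr$ is nonincreasing, hence $\lambda^{\text{opt}}(\snr) \leq 0$ for all $\snr \geq \snr_0$; combined with monotonicity this forces $\CAP(\snr,\inr)$ to be constant on $[\snr_0,\infty)$, contradicting the trivial lower bound~\eqref{eq:cap trivial lower}, $\CAP(\snr,\inr) \geq \log(1+\snr/(1+\inr)) \to \infty$ as $\snr\to\infty$. The main technical obstacle will be rigorously justifying the identification of the KKT multiplier $\lambda^{\text{opt}}(\snr)$ with the one-sided weak$^{\star}$ derivative of $\CAP(\snr,\inr)$; this is a standard consequence of convex duality when the primal problem has a unique optimizer and the constraint qualification (here Slater's condition, trivially met since the zero input is strictly feasible) holds, but it needs to be invoked carefully given the infinite-dimensional nature of the space $\mathcal{F}$.
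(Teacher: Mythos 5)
Your argument is correct, but it departs from the paper's on the $\lambda>0$ side. For the upper bound $\lambda^{\text{opt}}<1$, the paper tersely invokes the Envelope Theorem together with~\eqref{eq:cap trivial upper 1}; your chord-slope inequality $\lambda^{\text{opt}}(\snr)\leq \CAP(\snr,\inr)/\snr \leq \log(1+\snr)/\snr<1$ fills in exactly what that terse sentence means, so that half is essentially the paper's route, just spelled out. For the lower bound $\lambda^{\text{opt}}>0$, however, the paper does not use the value-function/concavity machinery at all: it argues that if $\lambda=0$, the KKT condition~\eqref{eq:kkt} together with the invertibility of the integral transform (Appendix~\ref{app:invertibility}) would force the output density to be the constant $\exp\{-h(\widetilde Y;\Fopt)\}$, which fails to integrate to one. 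Your alternative — if $\lambda^{\text{opt}}(\snr_0)=0$ then concavity and monotonicity of $\CAP(\cdot,\inr)$ force $\CAP$ to be constant on $[\snr_0,\infty)$, contradicting the divergent lower bound~\eqref{eq:cap trivial lower} — is a sound, genuinely different argument. Your route is more elementary and channel-agnostic (it uses only monotonicity, concavity of $\CAP$ in $\snr$, and a crude achievability bound), while the paper's route is local and structural (it ties $\lambda=0$ directly to an infeasible output law via the kernel's invertibility, which is a channel-specific lemma the paper needs anyway for other parts of the theorem). Either approach suffices; yours has the advantage of not depending on the invertibility lemma, whereas the paper's keeps everything inside the KKT framework. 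One small caveat worth flagging in your write-up: since $\CAP(\cdot,\inr)$ is concave, it may fail to be differentiable at isolated points, so you should fix a convention (e.g., $\lambda^{\text{opt}}(\snr)$ is the right-derivative) so that the claim ``right-derivative identically zero on $[\snr_0,\infty)$ implies constancy'' is unambiguous; with that convention the argument goes through.
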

For the feasible range $0<\lambda<1$, we re-write the KKT condition in~\eqref{eq:kkt} by following the recent work~\cite{faycal}.
Given the conditional output power expressed as in~\eqref{eq:conditional output power},
we can write 
\begin{align}
x-\snr 
  &= \int_{y\geq 0} \!\!\!
  \left(y-(1+\inr+\snr)\right)  K(x,y)  \ dy, \ \forall x\in\mathbb{R}_+.
\label{eq:new way to write pc}
\end{align}
With~\eqref{eq:new way to write pc}, the KKT condition in~\eqref{eq:kkt} reads: 
there exists a constant $0< \lambda < 1$ such that 
\begin{align}
g(x,\lambda)  \leq h(\widetilde{Y};\Fopt)=\text{constant for all $x\in \mathbb{R}_+$},
\label{eq:new KKT}
\end{align}
with equality only at the points of increase of $\Fopt$,
and where
\begin{subequations}
\begin{align}
g(x,\lambda)  
  &:= \int_{y\geq 0}  K(x,y)\log\left(\frac{\lambda e^{-\lambda y}}{f_{\widetilde{Y}}(y;\Fopt)} \right) \ dy
\label{eq:def g funt lx}
\\& 
  +\log\frac{1}{\lambda}+\lambda(1+\inr+\snr).
\label{eq:def g funt rx}
\end{align}
\label{eq:def g funt}
\end{subequations}

%
We show next that~\eqref{eq:new KKT} can not be satisfied if $\Fopt$ contains an infinite number of mass points in some bounded interval. This step is accomplished by showing that the function $g(x,\lambda), \ x\in\mathbb{R}_+,$ in~\eqref{eq:def g funt} can be extended to the complex domain and that $g(z,\lambda), \ z\in\mathbb{C}_+,$ is analytic. 
\begin{remark}
In this type of analysis, we only require the analyticity of the function $g(z,\lambda)$ over a region in the complex domain which contains the non-negative real line. Hence, it is sufficient to prove the analyticity of $g(z,\lambda)$ over a strip around the non-negative real line but we prove it over the entire right half plane (see Appendix~\ref{app:analyticity}).
\end{remark}
Since the analytical function $g(z,\lambda)$ is equal to a constant at the set of points of increase of $\Fopt$ and since the set of points of increase of $\Fopt$ has an accumulation point (by the Bolzano Weierstrass Theorem~\cite{Lang-complex}), by the Identity Theorem~\cite{Lang-complex}, we conclude that $g(z,\lambda)=\text{constant}, \ \forall z\in \mathbb{C}_+$. As the result $g(x,\lambda)=\text{constant}, \ \forall x\in \mathbb{R}_+$. 
%
%
%
%
One solution, and the \textit{only solution} due to invertibility of the kernel $K(x,y)$ (see Appendix~\ref{app:invertibility}), for  $g(x,\lambda)$ to be a constant and not to depend on $x$ is that the function that multiplies the kernel in the integral in~\eqref{eq:def g funt lx} is a constant (in which case $\int_{y\geq 0} K(x,y) \ dy=1$ for all $x\in \mathbb{R}_+$). 
For this to happen, we need
\begin{align}
f_{\widetilde{Y}}(y;\Fopt)= \lambda e^{-\lambda y} , \ \forall y\in \mathbb{R}_+,
\end{align}
or in other words, we need  the output $\bfY$ to be  a zero-mean proper-complex Gaussian random variable.
Such an output in additive models is only possible if the noise is Gaussian, which is only possible if $\inr=0$.
Therefore, unless $\inr=0$, is it impossible for $\Fopt$ to have an accumulation point and therefore $\Fopt$ must have finitely many masses in any bounded interval.
Thus, we ruled out case~\ref{continuous}.

{\bf Rule out case~\ref{finite}} ($\Fopt$ has a finite number of points).
We again proceed by contradiction.
We assume that the number of mass points is finite, say given by an integer $M < +\infty$,
with optimal masses located at $0\leq x_1^\star< \ldots< x_M^\star <\infty$ and each occurring with probability $p_1^\star,\ldots,p_M^\star,$ respectively. Note that the superscript $\star$ is used to emphasize the optimality of the parameters. Then the output pdf corresponding to this specific input distribution is
\begin{align}
 f_{\widetilde{Y}}(y;{ \Fopt})
  &=\sum_{i=1}^M p_i^\star K(x_i^\star,y)\notag
\\&=\sum_{i=1}^M p_i^\star \int_{|\theta|\leq \pi}\frac{e^{-(y+x_i^\star+\inr+2\sqrt{x_i^\star\inr}\cos \theta)}}{2\pi}
 \quad \cdot
 I_0\left(2\sqrt{y(x_i^\star+\inr+2\sqrt{x_i^\star\inr}\cos \theta)}\right)d\theta,
 \label{eq:output pdf with equiv.kernel}
\end{align}
where the expression in~\eqref{eq:output pdf with equiv.kernel} is based on an equivalent way to write the kernel in~\eqref{eq:kernel def} (see eq.\eqref{eq:kernel alternative def} in Appendix~\ref{app:kernel derivation}).
With~\eqref{eq:output pdf with equiv.kernel}, 
one can bound the marginal entropy 
in~\eqref{eq:marginal output entropy def} as 
\begin{subequations}
\begin{align}
 -h(x; { \Fopt})&=\int_{y\geq 0} K(x,y) \log f_{\widetilde{Y}}(y;\Fopt) dy\\
&\leq -\left(x\!+\!\inr\!+\!1\!+\!\log(2\pi)\right)\!+\!\log \left( \sum_{i=1}^M p_i^\star e^{-(\sqrt{x_i^\star}+\sqrt{\inr})^2} \right) 
\label{eq:indep}
\\
&+\int_{y\geq 0} K(x,y)\left(2\sqrt{y}(\sqrt{x_M^\star}+\sqrt{\inr})\right) \ dy, 
\label{eq:NO indep}
\end{align}
\label{eq:indep all together}
\end{subequations}
where the second term in~\eqref{eq:indep} is independent of $x$ 
and hence we only need to deal with~\eqref{eq:NO indep}.
The term in~\eqref{eq:NO indep} can be bounded as
\begin{align}
\mathbb{E}\left[\sqrt{\widetilde{Y}} \Big|\widetilde{X}=x\right]
&\leq \sqrt{\mathbb{E}\left[\widetilde{Y} \Big|\widetilde{X}=x\right]} \label{eq:Jensen}
=\sqrt{1+x+\inr},
\end{align}
where~\eqref{eq:Jensen} follows from Jensen's inequality 
and by~\eqref{eq:conditional output power}.
With the bound in~\eqref{eq:indep all together}
back into the KKT condition in~\eqref{eq:kkt} we get 
\begin{align}
- x+c\sqrt{x}+\kappa_1>-\lambda x +\kappa_2,
\label{eq:kkt at large x}
\end{align}
for some finite constants $c>0, \kappa_1, \kappa_2$ that are not functions of $x$. 
However, as $x\rightarrow \infty$, for $\lambda<1$ (by Proposition~\ref{prop:lambda}),
the right-hand-side of~\eqref{eq:kkt at large x} grows faster than the left-hand-side, which is impossible.
We reached a contradiction,  which implies that the optimal number of mass points can not be finite.
Thus, we ruled out case~\ref{finite}.

Having ruled out the possibility that $\Fopt$ has either infinitely many mass points in some bounded interval or is discrete with finitely many mass points, the only remaining option is that $\Fopt$ has countably infinitely many mass points, but only a finite number of masses in any bounded interval.
This concludes the proof. 
\end{IEEEproof}

 
 \section{Capacity at high INR}\label{sec: high inr}
 In this section, we prove that in the high INR regime, the communication system has only 1/2 the degrees of freedom compared to the interference-free system; which is a substantial improvement from the zero rate achieved when communication in presence of radar signal is prohibited. We also show that the Gaussian input is asymptotically optimal as $\inr \to \infty$.
 \begin{theorem}
 The capacity of channel~\eqref{eq:cap def main} as $\inr \to \infty$  is given by
 \[\lim_{\inr \to \infty}\CAP(\snr,\inr)= \frac{1}{2}\log(1+\snr).\]
 \end{theorem}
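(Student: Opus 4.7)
The plan is to sandwich $\CAP(\snr,\inr)$ between matching achievability and converse bounds, both benchmarked against a proper-complex Gaussian input. The underlying intuition is that as $\inr\to\infty$ the output concentrates on the circle of radius $\sqrt{\inr}$ and, by the uniform distribution of $\Theta_{\inr}$, the tangential direction carries no information; only the radial (one real) dimension survives, which suggests $\tfrac12\log(1+\snr)$ as the limit.

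\textbf{Achievability.} Use $\bfX=\bfX_G\sim\mathcal{CN}(0,\snr)$. Conditioned on $\Theta_{\inr}$, the output $\bfY_G=\bfX_G+\sqrt{\inr}e^{j\Theta_{\inr}}+\bfZ$ is proper-complex Gaussian with mean $\sqrt{\inr}e^{j\Theta_{\inr}}$ and variance $1+\snr$, so after averaging over $\Theta_{\inr}$ one gets
\begin{equation*}
f_{\widetilde{Y}_G}(y)=\frac{1}{1+\snr}\,e^{-(y+\inr)/(1+\snr)}\,I_0\!\left(\frac{2\sqrt{\inr y}}{1+\snr}\right),\qquad y\ge 0,
\end{equation*}
and the density of $|\bfW|^2$ is obtained by setting $\snr=0$ in the same formula. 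Via~\eqref{eq:cap def}, $I(\bfX_G;\bfY_G)=h(\widetilde{Y}_G)-h(|\bfW|^2)$, and I would compute both differential entropies using the large-argument asymptotic $\log I_0(z)=z-\tfrac12\log(2\pi z)+o(1)$ as $z\to\infty$. The divergent $\tfrac12\log\inr$ terms appear identically in both and cancel in the subtraction, leaving $\tfrac12\log(1+\snr)+o(1)$.

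\textbf{Converse.} For any admissible $F_{\widetilde X}$, apply Gibbs' inequality with the above Gaussian-induced density as the reference:
\begin{equation*}
h(\widetilde{Y};F_{\widetilde X})\le -\int_0^\infty f_{\widetilde{Y}}(y;F_{\widetilde X})\,\log f_{\widetilde{Y}_G}(y)\,dy.
\end{equation*}
Expanding $-\log f_{\widetilde{Y}_G}$ and using $(\sqrt{y}-\sqrt{\inr})^2=y+\inr-2\sqrt{\inr y}$, the right-hand side becomes $\log(1+\snr)+E[(\sqrt{\widetilde{Y}}-\sqrt{\inr})^2]/(1+\snr)$ plus a Bessel-function term. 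Treating the Bessel term with the same asymptotic, and using the moments $E[\widetilde{Y}]=\snr+\inr+1$ and the reverse-triangle bound $||\bfY|-\sqrt{\inr}|\le |\bfX+\bfZ|$, the RHS matches $h(\widetilde{Y}_G)+o(1)$ as $\inr\to\infty$; subtracting $h(|\bfW|^2)$ then gives $\CAP(\snr,\inr)\le\tfrac12\log(1+\snr)+o(1)$.

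\textbf{Main obstacle.} The clean cancellation of the $\tfrac12\log\inr$ divergences works out, but pinning down the precise constant $\tfrac12$ in the degrees-of-freedom loss rests on knowing $E[(|\bfY|-\sqrt{\inr})^2]\to(\snr+1)/2$ uniformly in the input. The naive reverse-triangle bound $E[(|\bfY|-\sqrt{\inr})^2]\le E[|\bfX+\bfZ|^2]=\snr+1$ is loose by exactly a factor of two and produces a spurious extra $\tfrac12$ nat in the upper bound. Closing this gap requires exploiting the uniformity of $\Theta_{\inr}$: writing $\bfY=(\sqrt{\inr}+\xi)e^{j\Theta_{\inr}}+j\zeta e^{j\Theta_{\inr}}$ with $\xi=\mathrm{Re}((\bfX+\bfZ)e^{-j\Theta_{\inr}})$ and $\zeta=\mathrm{Im}((\bfX+\bfZ)e^{-j\Theta_{\inr}})$, one Taylor-expands $|\bfY|=\sqrt{\inr}+\xi+\zeta^2/(2\sqrt{\inr})+O(1/\inr)$ so that $E[(|\bfY|-\sqrt{\inr})^2]\to E[\xi^2]=E[|\bfX+\bfZ|^2]/2$. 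Making this rigorous, uniformly over inputs with possibly heavy-tailed modulus under $E[|\bfX|^2]\le\snr$, is the principal technical challenge---one would combine the Taylor expansion on a high-probability set $\{|\bfX+\bfZ|\le\inr^{1/4}\}$ with a tail bound on its complement that vanishes as $\inr\to\infty$.
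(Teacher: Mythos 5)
Your proposal follows the same high-level route as the paper: achievability via a proper-complex Gaussian input, converse via the non-negativity of relative entropy (Gibbs'/cross-entropy bound) against exactly the Gaussian-induced reference density $R(y)=\frac{1}{1+\snr}e^{-(y+\inr)/(1+\snr)}I_0(2\sqrt{\inr y}/(1+\snr))$, and the large-argument Bessel asymptotic $\log I_0(z)\approx z-\tfrac12\log(4\pi z)$ to organize the cancellation of divergences. You also correctly identify that the whole argument hinges on the second-order term: one needs $\mathbb{E}\bigl[(\sqrt{\widetilde{Y}}-\sqrt{\inr})^2\bigr]\to(\snr+1)/2$, equivalently $\mathbb{E}[\sqrt{\widetilde{Y}}]=\sqrt{\inr}+\tfrac{\snr+1}{4\sqrt{\inr}}+O(1/\inr)$, and that a naive reverse-triangle bound is off by a factor of $2$.

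Where you differ from the paper is in how this key moment is computed. You propose a probabilistic/geometric argument: rotate into the radar frame, write $\bfX+\bfZ$ in radial/tangential components $\xi,\zeta$, Taylor-expand $|\bfY|=\sqrt{\inr}+\xi+\zeta^2/(2\sqrt{\inr})+O(1/\inr)$, and exploit the rotational symmetry induced by uniform $\Theta_{\inr}$ to get $\mathbb{E}[\xi^2]=\mathbb{E}[\zeta^2]=(\snr+1)/2$. The paper instead computes $\mathbb{E}[\sqrt{\widetilde{Y}}]$ directly through the kernel: it uses the closed form $\int_0^\infty\sqrt{y}\,e^{-\alpha y}I_0(2\beta\sqrt{y})\,dy=\Gamma(3/2)\alpha^{-3/2}{}_1F_1(3/2,1,\beta^2/\alpha)$ and its large-argument expansion, followed by a complete elliptic-integral expansion of the residual $\theta$-average $\mathbb{E}_\theta\bigl[\sqrt{x+\inr+2\sqrt{x\inr}\cos\theta}\bigr]$ (Appendix~\ref{app:sqrt and log expectation}); the truncation $x\leq \inr^{1-\delta}$ plays the role of your high-probability set. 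The paper also handles a companion term $\mathbb{E}[\log\widetilde{Y}]$ (Appendix via Lemma~\ref{lemma:log int}). Your route is more geometric and makes the ``half a degree of freedom'' interpretation transparent; the paper's route is more directly computational but comes with a ready-made scheme for uniformity over inputs (split the integral at $A=\inr^{1-\delta}$, bound the tail via Markov/DCT), which is precisely the step you flag as the open gap. Both arrive at the same asymptotics; to complete your version you would need to make the truncation-plus-tail argument precise and also account for the $\tfrac14\mathbb{E}[\log\widetilde{Y}]$ term, which you folded into ``a Bessel-function term'' without treating explicitly.
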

 \begin{IEEEproof}
 We show that in the high INR regime, the mutual information between the input and the output is upper bounded by $\frac{1}{2}\log(1+\snr)$ for {\it any} input distribution subject to an average~power constraint. We then show that the Gaussian input can asymptotically achieve this upper bound as $\inr \to \infty$. We write:
 \begin{subequations}
 \begin{align}
I(\bfX;\bfY)&=h(\bf{Y})-h(\bfW)\notag
\\
&=h(\widetilde{Y})-h(\widetilde{W})\label{eq:circ sym} \\
&\leq \int_{y\geq 0} f_{\widetilde{Y}}(y) \log \frac{1}{R(y)} \dy -h(\widetilde{W}),\label{eq:ub cross entropy}
 \end{align}
 \end{subequations}
 where~\eqref{eq:circ sym}  is because $\bfY$ and $\bfW$ are circularly symmetric, \eqref{eq:ub cross entropy} is due to non-negativity of relative entropy and where $R(y)$ is an auxiliary output density function which is absolutely continuous with respect to $f_{\widetilde{Y}}(y)$. Take
\begin{align}
R(y)=\frac{1}{\snr+1}e^{-\left(\frac{y+\inr}{\snr+1}\right)}I_0\left(2\frac{\sqrt{y\inr}}{\snr+1}\right),\label{eq:R(y)}
\end{align}
to be the auxiliary output distribution in~\eqref{eq:ub cross entropy}. The intuition behind this choice of $R(y)$ lies behind our conjecture that the Gaussian input is optimal for large INR and the fact that~\eqref{eq:R(y)} is the induced distribution on $\widetilde{Y}$ by a proper-complex Gaussian input.  Then by~\eqref{eq:ub cross entropy} we have
\begin{subequations}
\begin{align}
&\lim_{\inr\to \infty} I(\bfX;\bfY)
\leq\lim_{\inr\to \infty}  \int_{x\geq 0,y\geq 0}K(x,y) \log\left(\frac{(\snr+1)e^{\frac{y+\inr}{\snr+1}}}{I_0(2\frac{\sqrt{y\inr}}{\snr+1})} \right) \ dy \dF-\frac{1}{2}\log(4\pi e \inr) \label{eq:entropy high inr}
\\
&=\log(\snr+1) +\lim_{\inr\to \infty}\Big\{ \frac{\snr+2\inr+1}{\snr+1}-\int_{x\geq 0,y\geq 0}K(x,y)\log\left( \frac{e^{\frac{2\sqrt{y\inr}}{\snr+1}}}{\sqrt{4\pi\frac{\sqrt{y\inr}}{\snr+1}}}  \right)\ dy \dF \Big\}-\frac{1}{2}\log(4\pi e \inr) \label{eq:bessel substitution} 
\\
&=\frac{1}{2}\log(\snr+1) +\lim_{\inr\to \infty} \Big\{ \frac{\snr+2\inr+1}{\snr+1} -2\frac{\sqrt{\inr}}{\snr+1}\mathbb{E}[\sqrt{\widetilde{Y}}]+\frac{1}{4}\log(\inr)+\frac{1}{4}\mathbb{E}[\log(\widetilde{Y})]-\frac{1}{2}\log(e \inr)\Big\} \notag\\
&\leq \frac{1}{2}\log(\snr+1)+\frac{1}{2}+\lim_{\inr\to \infty}\Big\{ \frac{2\inr}{\snr+1}-\frac{2\sqrt{\inr}}{\snr+1}\left[\sqrt{\inr}+\frac{\snr+1}{4\sqrt{\inr}}+O(\frac{1}{\inr}) \right] \Big\}\label{eq:sqrt and log expectation}
\\
&=\frac{1}{2}\log(\snr+1)\label{eq: cap ub Gaus},
\end{align}
\end{subequations}
where~\eqref{eq:entropy high inr} is by calculating the entropy of a non-central Chi-square distribution with 2 degrees of freedom as the non-centrality parameter $\inr$ goes to infinity~\cite[eq. (9)]{lapidoth} and where~\eqref{eq:bessel substitution} and~\eqref{eq:sqrt and log expectation} are proved in Appendix~\ref{app:bessel substitution} and Appendix~\ref{app:sqrt and log expectation}, respectively.
Next, a Gaussian input can achieve the upper bound given in~\eqref{eq: cap ub Gaus}, as follows
\begin{subequations}
\begin{align}
    \lim_{\inr\to\infty}I(\bfX_G; \bfY) 
   &= \lim_{\inr\to\infty}h(\bfY) - h(\bfW) \label{eq:X_G}
\\&=\lim_{\inr\to\infty}
  \log(1+\snr)
 +h\left(\sqrt{\frac{\inr}{1+\snr}}e^{j\Theta_{\inr}}+{\bfZ}\right) 
 \!-h\left(\!\sqrt{\inr}e^{j\Theta_{\inr}}+{\bfZ}\!\right)\notag
\\&=\lim_{\inr\to\infty}
  \log(1+\snr)
 +\frac{1}{2}\log\left(1+\frac{\inr}{1+\snr}\right) 
 -\frac{1}{2}\log\left(1+\inr\right) \label{eq:entropy high inr2}
\\&=\frac{1}{2}\log(1+\snr)+
\lim_{\inr\to\infty}\frac{1}{2}\log\left(1+\frac{\snr}{1+\inr}\right) \notag
\\&=\frac{1}{2}\log(1+\snr),\notag
\end{align}
\end{subequations}
wher $\bfX_G$ is the proper-complex Gaussian input and where~\eqref{eq:entropy high inr2} is again by~\cite[eq. (9)]{lapidoth}.
 \end{IEEEproof}
 
\section{Numerical Evaluations}
\label{sec:num}
In this section, we numerically find a sub-optimal input for fixed $\snr =5$ and three different values of $\inr $ in the regime $0<\alpha :=\frac{\inr_\text{(dB)}}{\snr_\text{(dB)}} < 2$ and we compare the achieved rates with that of a proper-complex Gaussian input. We also evaluate different achievable rates in the regime $-1\leq \alpha\leq 2.5$ and compare them with the bound in Section~\ref{subsec:trivial bounds}. 

Numerically finding the optimal input for the channel considered in this paper is more challenging  compared to channels with finite dimensional capacity achieving inputs such as the ones considered in~\cite{smith} and \cite{bar-david}. In~\cite{smith}, for example, the optimization was initially performed for a very low SNR where an input with two mass points was proved to be optimal. As SNR increased, more mass points were added to the optimization problem in order to satisfy the KKT conditions and guarantee the optimality of the input. In the channel considered here however, a finite number of mass points is sub-optimal at {\it any} SNR. Hence, in the rest of this section, we find sub-optimal inputs with a finite number of mass points and solving the corresponding constraint optimization problem. We increase the number of mass points until the achieved rate remains unchanged after the $3$rd digit after the decimal point. Figure~\ref{fig:mass points snr} shows the location of the  mass points for each sub-optimal input as a function of SNR. We note that we do not claim the rates achieved with these inputs to be optimal, nor do we claim that these input distributions are capacity-optimal. It is however interesting to note that they can outperform Gaussian inputs. 

\begin{figure}[h]
\centering
\includegraphics[width=9cm]{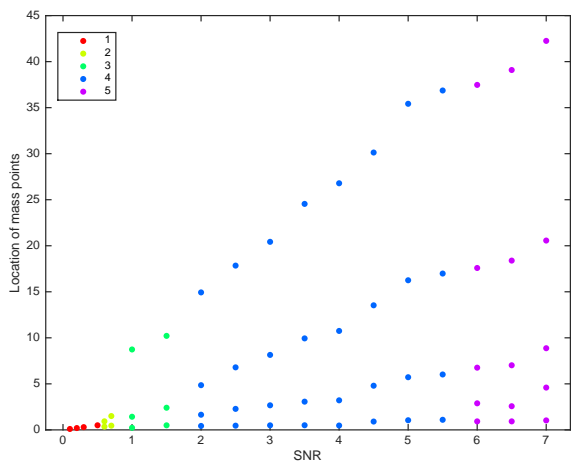}
\caption{Location of mass points for sub-optimal input as a function of SNR for fixed INR=$5$.}
\label{fig:mass points snr}
\end{figure}

We find the achievable inputs for fixed $\snr=5$ and three different values of $\inr=[3.6239, 9.5183,   25]$ which correspond to $\alpha=[0.8, 1.4, 2]$, by solving a finite dimensional constraint optimization problem. The achievable rates obtained by a Gaussian input and optimized finite dimensional inputs are given in Table~\ref{table: Gaussian vs opt input snr5}. As it can be seen, the optimized finite dimensional inputs  achieve marginally better rates than the Gaussian input.
\begin{table}[h]
\centering
       \caption{Achievable rates for Gaussian and optimized finite dimensional input, $\snr=5$.}
    \label{table: Gaussian vs opt input snr5}
    \begin{tabular}{|c|c|c|c|}
    \hline
    \diaghead{Optimized finite dimension}{Input}{INR}
     & $\snr^{0.8}=3.6239$ & $\snr^{1.4}=9.5183$ & $\snr^2=25$  \\ \hline
    Gaussian & 1.2905 & 1.1910 & 1.2470  \\ \hline
     Optimized finite dimension& 1.2927 & 1.1922 & 1.2480  \\ \hline
    \end{tabular}
 \caption{Achievable rates for Gaussian and optimized finite dimensional input, $\snr=10$.}
    \label{table: Gaussian vs opt input snr10}
 \begin{tabular}{|c|c|c|c|}
    \hline
    \diaghead{Optimized finite dimension}{Input}{INR}
     & $\snr^{0.8}=6.3096$ & $\snr^{1.4}=25.1189$ & $\snr^2=100$  \\ \hline
    Gaussian & 1.6986 & 1.6393 & 1.7100  \\ \hline
     Optimized finite dimension& 1.7108 & 1.6398 &  1.7102 \\ \hline
    \end{tabular}

\end{table}
 
\begin{figure}
\centering
\includegraphics[height=5cm]{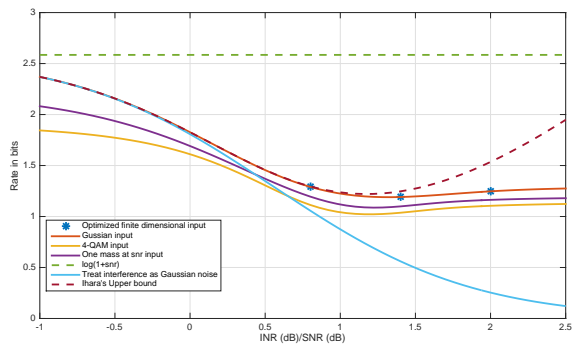}
\caption{\small Lower and upper bounds to the capacity vs $\inr$
for fixed $\snr = 5 = 6.9897 \text{dB}$.}
\label{fig:4-QAM-snr5dB}
\end{figure}

In Fig.\ref{fig:4-QAM-snr5dB} we plot achievable rates as function of $\inr$ for fixed $\snr=5$:
\begin{itemize}
\item (yellow solid line) 
an equally likely $4$-QAM constellation, 
\item (purple solid line)
a distribution with uniform phase and  only one mass point at $\sqrt{\snr}$ for the modulo, 
\item (orange solid line) 
a proper-complex Gaussian input, 
\item (blue solid line) 
treat the radar interference as Gaussian noise as given in~\eqref{eq:cap trivial lower}, and
\item (stared cyan points) optimized finite dimensional input.
\end{itemize}
We also show the outer bound in~\eqref{eq:cap trivial upper 2} (green dashed line)
and the one in~\eqref{eq:cap trivial upper 1} (red dashed line).

The Gaussian input performs very well for $\alpha :=\frac{\inr_\text{(dB)}}{\snr_\text{(dB)}} < 1$, where it closely follows the upper in~\eqref{eq:cap trivial upper 2}, in comparison to the discrete $4$-QAM input and a distribution with uniform phase and only one mass point at $\sqrt{\snr}$ for the modulo.  Although this behavior was expected for $\inr\ll 1$ (actually a Gaussian input is optimal for $\inr=0$), it is very pleasing to see that it actually performs very well for the whole regime $\inr\leq\snr$. 


We note that the equally likely $4$-QAM and the distribution with uniform phase and only one mass at $\sqrt{\snr}$ for the modulo are only a `constant gap' away from the the upper bound in~\eqref{eq:cap trivial upper 3} for the simulated $\snr=5$, which shows that capacity can be well approximated by inputs with a finite number of masses. The rate achieved by optimized finite dimensional input at $\inr=\snr^{0.8}, \snr^{1.4}$ and $\snr^2$ is only slightly higher than the rate achieved by a proper-complex Gaussian input.


\section{Conclusion}
\label{sec:concl}
In this paper we studied the structural properties of the optimal (communication) input of a new channel model which models the impact of a high power, short duty cycle, wideband, radar interference on a narrowband communication signal. In particular, we showed that the optimal input distribution has uniform phase independent of the modulo, which is discrete with countably infinite many mass points. 
We also argue that for large radar interference there is a loss of half the degrees of freedom compared to  the interference-free channel.

\section*{
Acknowledgment} The work of the authors was partially funded by NSF under award 1443967. The contents of this article are solely the responsibility of the authors and do not necessarily represent the official views of the NSF.

\section{Appendices}

\subsection{Derivation of the kernel $K(x,y)$ in~\eqref{eq:kernel def}}
\label{app:kernel derivation}

By~\eqref{eq:additive ch noise pdf} and
by passing to polar coordinates we have
\begin{subequations}
\begin{align}
K(x,y)
&:= f_{\widetilde{Y} \big| \widetilde{X}}(y|x)\notag
\\&= 
\int_{0}^{2\pi}d\phi  
\int_{0}^{2\pi}\frac{d\alpha}{2\pi}
\cdot
\frac{ e^{-|\sqrt{y}e^{j\phi}-\sqrt{x}e^{j\alpha}|^2-\inr} }{2\pi}
\notag \cdot
 I_0\left(2\sqrt{I}|\sqrt{y}e^{j\phi}-\sqrt{x}e^{j\alpha}|\right) \notag
\\&=
\int_{|\theta|\leq \pi}
\frac{e^{-(y+x+\inr-2\sqrt{yx} \cos (\theta))}}{2\pi}
 I_0\left(2\sqrt{\inr}\sqrt{y+x-2\sqrt{yx}\cos(\theta)} \right) d\theta \label{eq:kernel primary def}
\\&=
\int_{|\theta|\leq \pi}\frac{e^{-(y+x+\inr+2\sqrt{x\inr}\cos (\theta))}}{2\pi}
I_0\left(2\sqrt{y}\sqrt{x+\inr+2\sqrt{x\inr}\cos(\theta)}\right) d\theta,
\label{eq:kernel alternative def}
\end{align}
\end{subequations}
where \eqref{eq:kernel primary def} and~\eqref{eq:kernel alternative def} correspond to solving for the two integrals in different orders.
\subsection{The map $F_{\widetilde{X}} \to h(\widetilde{Y};F_{\widetilde{X}})$ is weak$^{\star}$ continuous}
\label{appendix:continuity proof}

To prove the weak$^{\star}$ continuity of the $h(\widetilde{Y};F_{\widetilde{X}})$ in~\eqref{eq:output entropy def}, we show that for any sequence of distribution functions $\{F_n\}_{n=1}^\infty \in \cal{F}$ if 
$F_{n}\overset{w^*}{\rightarrow}F_{\widetilde{X}}$
then
$h(\widetilde{Y};F_{n})\rightarrow h(\widetilde{Y};F_{\widetilde{X}})$.
We have
 \begin{subequations}
 \begin{align}
 \lim_{n\rightarrow \infty} h(\widetilde{Y};F_{n})
 &=\lim_{n\rightarrow \infty} \int_{y\geq 0} f_{\widetilde{Y}}(y;F_n)\log\frac{1}{f_{\widetilde{Y}}(y;F_n) }dy
 \notag
 \\
 &=\int_{y\geq 0} \lim_{n\rightarrow \infty} f_{\widetilde{Y}}(y;F_n)\log\frac{1}{f_{\widetilde{Y}}(y;F_n)} dy
 \label{eq:step a}\\
 &=h(\widetilde{Y};F_{\widetilde{X}}),
 \label{eq:step b}
 \end{align}
 \end{subequations}
where the exchange of limit and integral in~\eqref{eq:step a} is due to the Dominated Convergence Theorem~\cite{resnick2013probability}, and equality in~\eqref{eq:step b} is due to continuity of the map $F_{\widetilde{X}} \to f_{\widetilde{Y}}(y;F_{\widetilde{X}})\log f_{\widetilde{Y}}(y;F_{\widetilde{X}})$. 
This last assertion is true by noting that $x \to x\log x$ is a continuous function of $x\in\mathbb{R}_+$ and $f_{\widetilde{Y}}(y;F_{\widetilde{X}})$ in~\eqref{eq:output pdf def} is a continuous function of $F_{\widetilde{X}}$ since $K(x,y)$ in~\eqref{eq:kernel def} is a bounded continuous function of $x$ for all $y\in\mathbb{R}_+$.

Back to~\eqref{eq:step a}, to satisfy the necessary condition required in the  Dominated Convergence Theorem, we have to show that there exists an integrable function $g(y)$ such that  
 \begin{align}
 \mid f_{\widetilde{Y}}(y;F_n)\log f_{\widetilde{Y}}(y;F_n)\mid < g(y), \ \forall y\in \mathbb{R}_+.
 \label{eq:some integrable funct g(y)}
 \end{align}
We state the following Lemma which is a generalization of the one given in~\cite[Lemma A.2]{gursoytechnical}.
 \begin{lemma}
 \label{lemma}
  For any $\delta_1>0$ and $0<x\leq 1$
 \begin{align}
 0 \leq -x\log x \leq \frac{e^{-1}}{\delta_1}x^{1-\delta_1}.
 \label{eq:general xlogx bound}
 \end{align}
 \begin{proof}
 Fix a $\delta_1>0$; 
 the fuction $x \to -x^{\delta_1}\log x$ is concave in $0<x\leq 1$,
 and is maximized at $x=e^{-1/\delta_1}$. Hence
 $-x^{\delta_1}\log x\leq \frac{e^{-1}}{\delta_1}$ and~\eqref{eq:general xlogx bound} follows.
 \end{proof}
 \end{lemma}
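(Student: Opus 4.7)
The plan is to establish the two inequalities separately. The lower bound $-x\log x \geq 0$ is immediate: for $0 < x \leq 1$ we have $\log x \leq 0$, and multiplying by the positive factor $x$ gives the claim (with equality at $x = 1$ and in the limit $x \to 0^+$).

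For the upper bound, the key observation is that the inequality $-x \log x \leq \frac{e^{-1}}{\delta_1} x^{1-\delta_1}$ is equivalent, after dividing both sides by $x^{1-\delta_1} > 0$, to the cleaner statement
\begin{equation*}
-x^{\delta_1} \log x \;\leq\; \frac{e^{-1}}{\delta_1}, \qquad 0 < x \leq 1.
\end{equation*}
So the task reduces to a one-dimensional maximization over $(0,1]$ of the function $g(x) := -x^{\delta_1}\log x$, with $\delta_1 > 0$ treated as a fixed parameter.

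Next I would study $g$ by elementary calculus. Note $g(1) = 0$ and $g(x) \to 0$ as $x \to 0^+$, so the supremum is attained in the interior. Differentiating,
\begin{equation*}
g'(x) = -\delta_1 x^{\delta_1 - 1}\log x - x^{\delta_1 - 1} = -x^{\delta_1 - 1}\bigl(1 + \delta_1 \log x\bigr),
\end{equation*}
which has a unique zero on $(0,1]$ at $x^\star = e^{-1/\delta_1}$. Since $g$ is non-negative on $(0,1]$ and vanishes at both endpoints, this unique interior critical point must be a maximum; the maximum value is $g(x^\star) = -e^{-1}\cdot(-1/\delta_1) = \frac{e^{-1}}{\delta_1}$, which completes the proof.

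There is no real obstacle here; the argument is a short calculus exercise. The only mild subtlety worth recording is that one should verify $x^\star = e^{-1/\delta_1} \in (0,1)$ for every $\delta_1 > 0$ (immediate since $-1/\delta_1 < 0$), so that the critical point lies in the domain of interest and the analysis genuinely locates a maximum rather than merely a stationary point. Alternatively, one could bypass the derivative computation by substituting $x = e^{-t/\delta_1}$ for $t \geq 0$, turning the inequality into $\frac{t}{\delta_1} e^{-t} \leq \frac{e^{-1}}{\delta_1}$, i.e.\ $t e^{-t} \leq e^{-1}$, which is the standard fact that $t \mapsto t e^{-t}$ on $[0,\infty)$ attains its maximum $e^{-1}$ at $t = 1$.
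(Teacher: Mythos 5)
Your reduction is exactly the paper's: divide through by $x^{1-\delta_1}$, so the task becomes showing $-x^{\delta_1}\log x \leq e^{-1}/\delta_1$ on $(0,1]$, and both you and the paper then locate the maximum at $x^\star = e^{-1/\delta_1}$ and evaluate. The difference is in how the critical point is certified as the global maximum. The paper asserts that $x \mapsto -x^{\delta_1}\log x$ is concave on $(0,1]$; in fact this is only true for $\tfrac{1}{2}\leq \delta_1 \leq 1$. Writing $g(x)=-x^{\delta_1}\log x$, one finds
\begin{equation*}
g''(x) = -x^{\delta_1-2}\bigl[\delta_1(\delta_1-1)\log x + 2\delta_1 - 1\bigr],
\end{equation*}
which is positive near $x=1$ when $\delta_1<\tfrac12$ (the bracket is negative at $x=1$), and positive near $x=0^+$ when $\delta_1>1$ (the bracket tends to $-\infty$). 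Your argument sidesteps this: you observe that $g$ is nonnegative on $(0,1]$, vanishes at $x=1$ and as $x\to 0^+$, and $g'(x) = -x^{\delta_1-1}(1+\delta_1\log x)$ has a unique zero, so that zero must be the global maximizer — a conclusion valid for every $\delta_1>0$, matching the generality claimed in the lemma. Your substitution $x=e^{-t/\delta_1}$ reducing the whole thing to $te^{-t}\leq e^{-1}$ is cleaner still and equally general. So your proof is not merely equivalent: it actually repairs the overclaim in the paper's justification, even though the lemma itself (and its use later, which only needs small $\delta_1$) is unaffected.
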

According to Lemma~\ref{lemma} we can write
 \begin{align*}
 \mid f_{\widetilde{Y}}(y;F_n)\log f_{\widetilde{Y}}(y;F_n)\mid\leq \frac{e^{-1}}{\delta_1}f_{\widetilde{Y}}(y;F_n)^{1-\delta_1}.
 \end{align*}
We next need to find $\Phi(y) : f_{\widetilde{Y}}(y;F_n) \leq \Phi(y)$ which would then lead to 
\begin{align}
g(y)=\frac{e^{-1}}{\delta_1}\Phi(y)^{1-\delta_1},
\label{eq:integrable ub}
\end{align}
which is integrable for some $0<\delta_1$. 
Similarly to~\cite[eq. A9]{katz} we can show that for any $\delta_2>0$
\begin{align}
\Phi(y)=\left\{\begin{matrix}
1 &y\leq 16\inr \\ 
\frac{M}{y^{1.5-\delta_2} }& y>16\inr 
\end{matrix}\right.,
\label{eq:upper bound phi function}
\end{align}
is such a desirable upper bound 
for some $M<\infty$.
The proof is as follows.
For 
$y> 16\inr$ we  write
\begin{align}
f_{\widetilde{Y}}(y;F_{\widetilde{X}})&=\int_0^{(\sqrt{y}/4-\sqrt{\inr})^2} K(x,y)dF_{\widetilde{X}}(x)\label{eq:first range}
+\int_{(\sqrt{y}/4-\sqrt{\inr})^2}^\infty K(x,y) d F_{\widetilde{X}}(x).
\end{align}
The first term in~\eqref{eq:first range} can be upper bounded as
\begin{align}
  &\int_0^{(\sqrt{y}/4-\sqrt{\inr})^2} K(x,y)dF_{\widetilde{X}}(x) \notag
\\&
 \leq e^{-y} \int_0^{{(}\sqrt{y}/4-\sqrt{\inr}{)^2}}
\int_0^{2\pi} \frac{e^{-(x+\inr+2\sqrt{x\inr}\cos \theta)} }{2\pi} \cdot\notag
I_0\left(2\sqrt{y}(\sqrt{x}+\sqrt{\inr})\right) d\theta \ dF_{\widetilde{X}}(x)\notag
\\&\leq e^{-y} I_0\left(2\sqrt{y} \frac{\sqrt{y}}{4} \right)\notag
\int_0^{(\sqrt{y}/4-\sqrt{\inr})^2} \int_0^{2\pi}
\frac{e^{-(x+\inr+2\sqrt{x\inr}\cos \theta)} }{2\pi} d\theta \ dF_{\widetilde{X}}(x)\notag
\\&\leq e^{-y}I_0\left(y/2\right) \cdot 1 
   \leq e^{-y/2},
\label{eq:upper bound 1}
\end{align}
while the second term in~\eqref{eq:first range} can be upper bounded as
\begin{subequations}
\begin{align}
  \int_{(\sqrt{y}/4-\sqrt{\inr})^2}^\infty K(x,y)dF_{\widetilde{X}}(x)
  \notag
&\leq \mathbb{P}[\widetilde{X}>(\sqrt{y}/4-\sqrt{\inr})^2]
\notag
 \cdot
\frac{e^{-y}}{2\pi}\int_0^{2\pi} \sup_{x_\theta>0}\Big\{e^{-x_\theta}I_0\left(2\sqrt{y}\sqrt{x_\theta}\right)\Big\}d\theta
\notag
\\
&\leq \frac{e^{-y}}{2\pi}\int_0^{2\pi}  \ \snr \
\frac{\sup_{x_\theta>0}\Big\{e^{-x_\theta}I_0\left(2\sqrt{y}\sqrt{x_\theta}\right)\Big\}}{(\sqrt{y}/4-\sqrt{\inr})^2} d\theta
\label{eq:chebyshev}\\
&\leq {e^{-y}}  \frac{3}{2} 
\frac{e^{y}}{\sqrt{4\pi y}}\left[1+O(1/ y) \right]\frac{\snr}{(\sqrt{y}/4-\sqrt{\inr})^2},
\label{eq:katz inequality}
\end{align}
\end{subequations}
where $x_\theta:=x+\inr+2\sqrt{x\inr}\cos(\theta)$,
the inequality in~\eqref{eq:chebyshev} is from Markov's inequality,
and the one in~\eqref{eq:katz inequality} is by~\cite[eq.(E.6)]{Katz-thesis}. 
By~\eqref{eq:upper bound 1} and~\eqref{eq:katz inequality}, we have
\begin{align*}
f_{\widetilde{Y}}(y;{ F_n})\leq \frac{12\snr}{\sqrt{\pi}}\left[\frac{1}{y^{1.5}}+O(\frac{1}{y^{2.5}}) \right].
\end{align*}
{Hence, for any $0<\delta_2<1$ there exists some $M<\infty$ and $y^*_{\delta_2}$, such that 
\begin{align}
f_{\widetilde{Y}}(y;{ F_n})<\frac{M}{y^{1.5-\delta_2}},
\label{eq: M_delta}
\end{align}
for all $y\geq y^*_{\delta_2}$. We fix $\delta_2$ now. Due to continuity of the $f_{\widetilde{Y}}(y;{ F_n})$ for $y\in [16\inr, y^*_{\delta_2}]$, there exists an $M<\infty$ such that~\eqref{eq: M_delta} holds for all $y>16\inr$. The bound in~\eqref{eq: M_delta} together with the one in~\eqref{eq:output pdf bounds} gives 
\begin{align*}
f_{\widetilde{Y}}(y;{ F_n})\leq\Phi(y),
\end{align*}
for any $0<\delta_2<1$ and some $M<\infty$} and where $\Phi(y)$ was defined in~\eqref{eq:upper bound phi function}. Finally, one can find small enough $\delta_1$ and $\delta_2$ such that $g(y)$ given in~\eqref{eq:integrable ub}
 is integrable.

\subsection{The map $F_{\widetilde{X}} \to h(\widetilde{Y};F_{\widetilde{X}})$ is strictly concave }
\label{appendix:concavity proof}

The function $h(\widetilde{Y};F_{\widetilde{X}})$ in~\eqref{eq:output entropy def0} is concave in $f_{\widetilde{Y}}(y;F_{\widetilde{X}})$ in~\eqref{eq:output pdf def} (because $x \to -x\log(x)$ is).
Since $f_{\widetilde{Y}}(y;F_{\widetilde{X}})$ is an injective function of $F_{\widetilde{X}}$ (due to invertibility of the kernel as proved in Appendix~\ref{app:invertibility}), we conclude that $h(\widetilde{Y};F_{\widetilde{X}})$ is a strictly concave function of $F_{\widetilde{X}}$.

\subsection{The functional $ h(\widetilde{Y};F_{\widetilde{X}})-L(F_{\widetilde{X}})$, is weakly* differentiable at $\Fopt$}
\label{app:weak differentiability}

By using the definition of the functional derivative, we show that  
$h^{\prime}_{\Fopt}(\widetilde{Y};F_{\widetilde{X}})$ and 
$L^{\prime}_{\Fopt}(F_{\widetilde{X}})$ exist for all $F_{\widetilde{X}},\Fopt$ and hence $h(\widetilde{Y};F_{\widetilde{X}})-L(F_{\widetilde{X}})$ is weak$^{\star}$ differentiable. 

First, for $\theta\in[0,1],$ we define 
$F_\theta:=(1-\theta)\Fopt+\theta F_{\widetilde{X}}$ 
and then we find the weak$^{\star}$ derivative of 
$h(\widetilde{Y};F_{\widetilde{X}})$ at $\Fopt$ as follows
\begin{align}
  h^{\prime}_{\Fopt}(\widetilde{Y};F_{\widetilde{X}})
& =\lim_{\theta \to 0^+} \frac{1}{\theta}\left[
    h(\widetilde{Y};F_\theta)
   -h(\widetilde{Y};\Fopt)
  \right]
\notag
\\
  &=\lim_{\theta \to 0^+} \frac{1}{\theta}\int_{x\geq 0} \int_{y\geq 0} K(x,y) \log \frac{1}{f_{\widetilde{Y}}(y;F_\theta)} dy\ dF_\theta(x) 
\notag
\\
  &-\lim_{\theta \to 0^+} \frac{1}{\theta}\int_{x\geq 0} \int_{y\geq 0} K(x,y) \log \frac{1}{f_{\widetilde{Y}}(y;\Fopt)} dy\ d\Fopt(x)
\notag
\\
  &= \int_{x\geq 0} h(x;\Fopt) dF_{\widetilde{X}}(x)-h(\widetilde{Y};\Fopt)
\notag
\\
  &-\int_{y\geq 0} \lim_{\theta \to 0^+} \frac{1}{\theta}  f_{\widetilde{Y}}(y;F_\theta) \log \frac{f_{\widetilde{Y}}(y;F_\theta)}{f_{\widetilde{Y}}(y;\Fopt)}dy, 
  \label{eq:theta change}
  \end{align}
where the interchange of limit and integral in~\eqref{eq:theta change} is due to Dominated Convergence Theorem. By~\cite[Lemma 6]{guo-lemma}, we can write
  \begin{align}
  &\Big|\frac{f_{\widetilde{Y}}(y;F_\theta)}{\theta}\log\frac{f_{\widetilde{Y}}(y;F_\theta)}{f_{\widetilde{Y}}(y;\Fopt)}\Big|\leq 
 f_{\widetilde{Y}}(y;F_{\widetilde{X}})+ f_{\widetilde{Y}}(y;\Fopt)
 \notag \\
&- f_{\widetilde{Y}}(y;F_{\widetilde{X}})\log f_{\widetilde{Y}}(y;F_{\widetilde{X}})- f_{\widetilde{Y}}(y;F_{\widetilde{X}})\log  f_{\widetilde{Y}}(y;\Fopt)
\notag \\
&\leq f_{\widetilde{Y}}(y;F_{\widetilde{X}})+ f_{\widetilde{Y}}(y;\Fopt)+2 f_{\widetilde{Y}}(y;F_{\widetilde{X}})(y+\inr+\snr),
\label{eq:integrable lemma}
\end{align}
where the right hand side of~\eqref{eq:integrable lemma} is integrable.
In addition, the term given in~\eqref{eq:theta change} is vanishing by L'Hospital's Rule. Hence, the weak$^{\star}$ derivative is given by 
  \begin{align}
 h^{\prime}_{\Fopt}(\widetilde{Y};F_{\widetilde{X}})
 &=\int_{x\geq 0} h(x;\Fopt) dF_{\widetilde{X}}(x)-h(\widetilde{Y};\Fopt).\label{weak-derivative}
  \end{align}
It is also easy to show that
\begin{align}
  L^{\prime}_{\Fopt}(F_{\widetilde{X}}) = {L(F_{\tilde{X}})-L(\Fopt)},
&
\end{align}
{exists because of the linear{ity} of the power constraint.}
 %
  %
  %
\subsection{Equaivalence of KKT conditions in~\eqref{eq:kkt} to~\eqref{eq:equivalent optimization}}
\label{app:kkt-equivalence}

Let $\mathcal{E}^{\text{opt}}$ be the set of points of increase of the optimal input distribution $\Fopt$. Then
\begin{align}
\int_{x\geq 0} \left(h(x;\Fopt)-\lambda x\right)dF_{\widetilde{X}}(x)\leq h(\widetilde{Y};\Fopt)-\lambda \snr \label{only-if}
\end{align}
for all $F_{\widetilde{X}}\in \mathcal{F} $ if and only if
\begin{align}
h(x;\Fopt)&\leq h(\widetilde{Y};\Fopt)+\lambda(x-\snr), 
\ \forall x\in \mathbb{R}_+\label{if-1},
\\
h(x;\Fopt)&= h(\widetilde{Y};\Fopt)+\lambda(x-\snr), 
\ \forall x\in \mathcal{E}^{\text{opt}}.
\label{if-2}
\end{align}
The {\it if} direction is trivial since the derivative given in~\eqref{weak-derivative} has to be non-positive. 
To prove the {\it only if} direction, assume that~\eqref{if-1} is false. 
Then there exists an $\widetilde{x}$ such that 
\begin{align*}
h(\widetilde{x};\Fopt)>h(\widetilde{Y};\Fopt)+\lambda(\widetilde{x}-\snr).
\end{align*}
If $\Fopt$ is a unit step function at $\widetilde{x}$, then
\begin{align*}
\int_{x\geq 0}\left(h(x;\Fopt)-\lambda x \right)dF_{\widetilde{X}}(x)=h(\widetilde{x},\Fopt)-\lambda\widetilde{x}
>h(\widetilde{Y};\Fopt)-\lambda \snr,
\end{align*}
which contradicts~\eqref{only-if}.
Assume that~\eqref{if-1} holds but~\eqref{if-2} does not, i.e., 
there exists $\widetilde{x}\in \mathcal{E}^{\text{opt}}$: 
\begin{align}
h(\widetilde{x};\Fopt)<h(\widetilde{Y};\Fopt)+\lambda(\widetilde{x}-\snr).
\label{eq:kkt condition contradict}
\end{align}
Since all functions in~\eqref{eq:kkt condition contradict} are continuous in $x$, the inequality is satisfied strictly on a neighborhood of $\widetilde{x}$ indicated as $E_{\widetilde{x}}$.
Since $\widetilde{x}$ is a point of increase, the set $E_{\widetilde{x}}$ has nonzero measure, i.e., $\int_{E_{\widetilde{x}}} d\Fopt(x) = \delta > 0$; hence
\begin{align*}
h(\widetilde{Y};\Fopt)-\lambda \snr&=\int_{x\geq 0}\left( h(x;\Fopt)-\lambda x\right)\ d\Fopt(x)
\\
&=\int_{E_{\widetilde{x}}}\left( h(x;\Fopt)-\lambda x\right)\ d\Fopt(x)
\\
&+\int_{\mathcal{E}^{\text{opt}}\backslash E_{\widetilde{x}}}\left( h(x;\Fopt)-\lambda x\right)\ d\Fopt(x)\\
&{<} \delta(h(\widetilde{Y};\Fopt)-\lambda \snr)+(1-\delta)(h(\widetilde{Y};\Fopt)-\lambda \snr),
\end{align*}
which is a contradiction.

\subsection{Proof of Proposition~\ref{prop:lambda}}\label{app:proof prop lambda}

We prove that $\lambda$ can not be equal to $0$ or greater than or equal to $1$. By the Envelope Theorem~\cite{envelope-theorem} and the upper bound in~\eqref{eq:cap trivial upper 1}, having $\lambda\geq1$ is not possible.
The case $\lambda^\text{opt}(\snr)=0$ is unfeasible; 
if otherwise, the unique solution of~\eqref{eq:kkt} (uniqueness follows by invertibility of the integral transform in~\eqref{eq:output pdf def} as proven in Appendix~\ref{app:invertibility}) would induce the output pdf
\begin{align}
f_{\widetilde{Y}}(y;\Fopt)=\exp\{-h(\widetilde{Y};\Fopt)\}, \ \forall y\in\mathbb{R}_+,
\end{align}
which is not a valid pdf since it does not integrate to one.
Therefore we conclude that we must have $0<\lambda^\text{opt}(\snr)<1$.

\subsection{The function $z \to g(z,\lambda)$ is analytic}
\label{app:analyticity}
The analyticity of $g(z,\lambda), \ z\in\mathbb{C}_+$, follows from the analyticity of $h(z;F_{\widetilde{X}})$ on the same domain, where $h(x;F_{\widetilde{X}})$ was defined in~\eqref{eq:marginal output entropy def}.
In other words, we want to show that the function
\begin{align}
h(z; F_{\widetilde{X}}) = \int_{y\geq 0} K(z,y)\log\frac{1}{f_{\widetilde{Y}}(y;F_{\widetilde{X}})} \ dy, \ z\in\mathbb{C}_+,
\label{eq:above}
\end{align} 
is analytic. 
Note that the integrand in~\eqref{eq:above} is a continuous function on $\{z\in \mathbb{C}_+\} \times \{ y\in \mathbb{R}_+  \}$ and analytic for each $y$ so we use the Differentiation Lemma~\cite{Lang-complex} to prove the analyticity by proving that $h(x;F_{\widetilde{X}})$ is uniformly convergent for any rectangle 
$K:=\{z\in \mathbb{C}: 0\leq a\leq \Re(z)\leq b, -b\leq \Im(z)\leq b\}$ (since any compact set $K\in \mathbb{C}$ is closed and bounded in the complex plane). 
By~\eqref{eq:output pdf bounds} 
we have
$\left|\log  f_{\widetilde{Y}}(y;F_{\widetilde{X}})\right|
  \leq 
y+\inr+\beta_{F_{\widetilde{X}}},
$, and as a result we have
\begin{align}
 &|h(z;F_{\widetilde{X}})|
  \leq \int_{y\geq 0} |K(z,y)| \ |\log f_{\widetilde{Y}}(y;F_{\widetilde{X}})|\ dy
  \notag
  \\
 &\leq \int_{y\geq 0} \frac{1}{2\pi}\int_{|\theta|\leq \pi}
 \left|e^{-(z+y-2\sqrt{zy} \cos \theta +\inr)}\right|
  \cdot
    \left|I_0\left(2\sqrt{\inr(z+y-2\sqrt{zy}\cos \theta)}\right)\right| 
 \cdot \left|y+\inr+\beta_{F_{\widetilde{X}}}\right| \ d\theta \ dy
 \notag 
 \\
 &\leq \int_{y\geq 0} \frac{1}{2\pi}\int_{|\theta|\leq \pi}e^{-\Re(z+y-2\sqrt{zy} \cos \theta+\inr)} 
 \cdot 
   I_0\left(2\Re \Big\{\sqrt{\inr(z+y-2\sqrt{zy}\cos \theta)}\Big\}\right)\ \left(y+\inr+\beta_{F_{\widetilde{X}}}\right) \ d\theta\ dy\notag
 \\
 &\leq \int_{y\geq 0} \frac{1}{2\pi}\int_{|\theta|\leq \pi}e^{-\Re(y+z-2\sqrt{zy} \cos \theta +\inr)}\notag
 \cdot 
   e^{2\Re \{\sqrt{\inr(z+y-2\sqrt{zy}\cos \theta)}\}} \left(y+\inr+\beta_{F_{\widetilde{X}}}\right)\ d\theta\ dy\notag
 \\
 &= \int_{y\geq 0} \frac{1}{2\pi}\int_{|\theta|\leq \pi} e^{-(\sqrt{\Re(y+z-2\sqrt{zy} \cos \theta)}-\sqrt{\inr})^2} 
 \cdot 
\left(y+\inr+\beta_{F_{\widetilde{X}}}\right)\ d\theta\ dy.\label{eq:exp decreasing}
\end{align}
Since~\eqref{eq:exp decreasing} is exponentially decreasing in $y\in \mathbb{R}_+$, the integral is bounded,  concluding the proof.

\subsection{Invertibility of the integral transform in~\eqref{eq:output pdf def}}
\label{app:invertibility}

To prove the invertibility of the transform 
\begin{align}
\breve{g}(y)=\int_{x\geq 0}K(x,y)g(x)\ dx, \ y\in \mathbb{R}_+,
\label{eq:invertibility}
\end{align}
we will show that if
$\breve{g}(y) \equiv 0\text{ for all } y\in \mathbb{R}_+,$ then
$ g(x)\equiv 0 \text{ for all } x\in \mathbb{R}_+.$ 
From the invertibility of~\eqref{eq:invertibility}, also the integral transform $\int_{y\geq 0}K(x,y)g(y) dy$ is invertible due to the symmetry of the kernel $K(x,y)$ in $x$ and $y$.

We first define the following two integrals~\cite[eq(6.633) and eq(6.684)]{table_of_integrals}
\begin{align}
\int_0^\infty e^{-\alpha y}I_\nu(\beta\sqrt{y}) J_\nu(\gamma \sqrt{y})\ dy= \frac{1}{2\alpha} \exp\left(\frac{\beta^2-\gamma^2}{4\alpha} \right) J_0\left(\frac{\beta \gamma}{2\alpha}\right),
\notag\\
\Re \{\alpha\}>0, \Re \{\nu\}>-1,
\label{integral_y}\\
\int_0^\pi  (\sin \theta)^{2\nu}  \frac{J_\nu\left(\sqrt{\alpha^2+\beta^2-2\alpha \beta \cos \theta}\right)}{\left(\sqrt{\alpha^2+\beta^2-2\alpha \beta \cos \theta}\right)^\nu}\ d\theta
=2^\nu \sqrt{\pi}\Gamma\left(\nu+\frac{1}{2}\right) \frac{J_\nu(\alpha)}{\alpha^\nu}\frac{J_\nu(\beta)}{\beta^\nu},
\notag \\
 \Re\{\nu\}>-\frac{1}{2},
\label{integral_theta}
\end{align}
where $J_\nu(.)$ and $I_\nu(.)$ are the $\nu$-th order Bessel function of the first kind and $\nu$-th order modified Bessel function of the first kind, and where $\Gamma(.)$ is the Gamma function.

We next use~\eqref{integral_y} and~\eqref{integral_theta} as follows.
If $\breve{g}(y)= 0$ for all $y\geq 0$, then for all $\gamma\geq 0$ we have
\begin{align}
   & {\textcolor{white}\Longleftrightarrow} \int_0^\infty J_0(\gamma\sqrt{y}) \breve{g}(y)\  dy = 0 \notag
\\ &\Longleftrightarrow
 \int_0^\infty g(x) dx \int_0^\pi 
 J_0\left(\gamma\sqrt{x+\inr+2\sqrt{x\inr}\cos \theta}\right)\ d\theta=0 
\label{eq:use integral 1}
\\ &\Longleftrightarrow
\int_0^\infty g(x) J_0(\gamma \sqrt{x})J_0(\gamma \sqrt{\inr})\ dx=0
\label{eq:use integral 2}
\\&\Longleftrightarrow
\int_0^\infty g(z^2) J_0(\gamma z) z\ dz =0
\notag
\\&\Longleftrightarrow \mathcal{H}\{ g(z^2)\} =0, 
\label{eq:hankel}
\\&\Longleftrightarrow
 g(z^2) =0,\  \forall z\in \mathbb{R}_+,
 \notag
\\&\Longleftrightarrow 
g(x) =0,\  \forall x\in \mathbb{R}_+,
\notag
\end{align}
where~\eqref{eq:use integral 1} follows by~\eqref{integral_y}, \eqref{eq:use integral 2} by~\eqref{integral_theta},  
and where $\mathcal{H}\{g(z)\}$ in~\eqref{eq:hankel} denotes the Hankel transform~\cite{hankel} of the function $g(z)$.

\subsection{Justification of~\eqref{eq:bessel substitution}} \label{app:bessel substitution}
In order to show that
\begin{align}
 &\lim_{\inr \to \infty}\int_{x\geq 0, y\geq 0}
 K(x,y) \log I_0\left(2\frac{\sqrt{y\inr}}{\snr+1}\right)  \ dy\ dF_{X'}(x)\notag
 \\
& =\lim_{\inr \to \infty} \int_{x\geq 0,y\geq 0}  
 K(x,y) \log  \left(\frac{e^{2\frac{\sqrt{y\inr}}{\snr+1}} }{\sqrt{4\pi \frac{\sqrt{y\inr}}{\snr+1}}}\right)\ dy \dF,\notag
\end{align}
we make the variable change $y\inr=z$ and prove
\begin{align}
 &\lim_{\inr \to \infty}\int_{x\geq 0,z\geq 0}
 K(x,\frac{z}{\inr}) \log I_0\left(2\frac{\sqrt{z}}{\snr+1}\right)  \ \frac{dz}{\inr}\dF \notag \\
 &= \lim_{\inr \to \infty} \int_{x\geq 0,z\geq 0} 
 K(x,\frac{z}{\inr}) \log \left( \frac{e^{2\frac{\sqrt{z}}{\snr+1}} }{\sqrt{4\pi \frac{\sqrt{z}}{\snr+1}}}\right)\ \frac{dz}{\inr} \dF.\notag
\end{align}
For $B=\max\{1,\inr^{1/3}\}$, we can write
\begin{subequations}
\begin{align}
-&\lim_{\inr\to \infty}\frac{1}{\inr}\int_{x\geq 0}\int_{z\geq 0} K(x,\frac{z}{\inr}) \log \left( I_0\left(2\frac{\sqrt{z}}{\snr+1}\right)\right)  \ dz\ \dF\notag 
\\
&=-\lim_{\inr\to \infty}\frac{1}{\inr}\int_{x\geq 0}\int_{z=0}^B K(x,\frac{z}{\inr}) \log \left( I_0\left(2\frac{\sqrt{z}}{\snr+1}\right) \right) \ dz\ \dF\notag 
\\
&-\lim_{\inr\to \infty}\frac{1}{\inr}\int_{x\geq 0}\int_{z\geq B} K(x,\frac{z}{\inr}) \log \left(  I_0\left(2\frac{\sqrt{z}}{\snr+1}\right) \right)  \ dz \dF\notag
\\
&= -\lim_{\inr\to \infty}\frac{1}{\inr}\int_{x\geq 0}\int_{z=0}^B K(x,\frac{z}{\inr}) \log \left(  I_0\left(2\frac{\sqrt{z}}{\snr+1}\right) \right)  \ dz \dF\notag
\\
&-\lim_{\inr\to \infty}\frac{1}{\inr}\int_{x\geq 0}\int_{z\geq B} K(x,\frac{z}{\inr}) \log \left(  \frac{e^{2\frac{\sqrt{z}}{\snr+1}} }{\sqrt{4\pi \frac{\sqrt{z}}{\snr+1}}}\right)  \ dz \dF\label{eq:approximate bessel}
\\
&= \lim_{\inr\to \infty}\frac{1}{\inr}\int_{x\geq 0}\int_{z=0}^B K(x,\frac{z}{\inr}) \log\left( \frac{e^{2\frac{\sqrt{z}}{\snr+1}} }{\sqrt{4\pi \frac{\sqrt{z}}{\snr+1}}}.\frac{1}{I_0\left(2\frac{\sqrt{z}}{\snr+1}\right)}\right)  \ dz \dF \label{eq:sandwich theorem}
\\
&-\lim_{\inr\to \infty}\frac{1}{\inr}\int_{x\geq 0}\int_{z\geq 0} K(x,\frac{z}{\inr}) \log \frac{e^{2\frac{\sqrt{z}}{\snr+1}} }{\sqrt{4\pi \frac{\sqrt{z}}{\snr+1}}}  \ dz \dF,\notag
\end{align}
\end{subequations}
 where~\eqref{eq:approximate bessel} holds by approximation of Bessel function with large arguments.

To this end, if we prove that the limit in~\eqref{eq:sandwich theorem} is zero, then 
our proof is complete. 
In this regard, we find an upper and lower bound on~\eqref{eq:sandwich theorem} and show that they are both zero.  We can upper bound~\eqref{eq:sandwich theorem} as
\begin{subequations}
\begin{align}
& \lim_{\inr\to \infty}\frac{1}{\inr}\int_{x\geq 0}\int_{z=0}^B K(x,\frac{z}{\inr}) \log\left( \frac{e^{2\frac{\sqrt{z}}{\snr+1}} }{\sqrt{4\pi \frac{\sqrt{z}}{\snr+1}}}.\frac{1}{I_0\left(2\frac{\sqrt{z}}{\snr+1}\right)}\right)  \ dz \dF \notag
\\
&\leq \lim_{\inr\to \infty}\frac{1}{\inr}\int_{z=0}^B \log\left( \frac{e^{2\frac{\sqrt{z}}{\snr+1}} }{\sqrt{4\pi \frac{\sqrt{z}}{\snr+1}}}.\frac{1}{I_0\left(2\frac{\sqrt{z}}{\snr+1}\right)}\right)  \ dz\label{eq:kernel less 1}
\\
&\leq \lim_{\inr\to \infty}\frac{1}{\inr}\int_{z=0}^B \log\left( \frac{e^{2\frac{\sqrt{z}}{\snr+1}} }{\sqrt{4\pi \frac{\sqrt{z}}{\snr+1}}}\right)  \ dz\label{eq:bessel larger 1}
\\
&= \lim_{\inr\to \infty}\frac{1}{\inr}\int_{z=0}^B\left( 2\frac{\sqrt{z}}{\snr+1}-\log \sqrt{\frac{4\pi}{\snr+1}}-\frac{1}{4}\log(z)\right)  \ dz\notag
\\
&=\lim_{\inr\to \infty}\frac{1}{\inr} \times \left(\frac{4z^{3/2}}{3(\snr+1)}-z\log(\sqrt{\frac{4\pi}{\snr+1}})+z\log(z)-z \right)\big|_{z=0}^B=0,\notag
\end{align}
\end{subequations}
where~\eqref{eq:kernel less 1} is by~\eqref{eq:kernel bounds} and where~\eqref{eq:bessel larger 1} is due to the fact that zero order modified Bessel function is always larger than or equal to $1$.
In addition, we can lower bound~\eqref{eq:sandwich theorem} as
\begin{subequations}
\begin{align}
& \lim_{\inr\to \infty}\frac{1}{\inr}\int_{x\geq 0}\int_{z=0}^B K(x,\frac{z}{\inr}) \log\left( \frac{e^{2\frac{\sqrt{z}}{\snr+1}} }{\sqrt{4\pi \frac{\sqrt{z}}{\snr+1}}}.\frac{1}{I_0\left(2\frac{\sqrt{z}}{\snr+1}\right)}\right)  \ dz \dF \notag
\\
&\geq \lim_{\inr\to \infty}\frac{1}{\inr}\int_{x\geq 0}\int_{z=0}^BK(x,\frac{z}{\inr}) \log\left( \frac{1 }{\sqrt{4\pi \frac{\sqrt{z}}{\snr+1}}}\right)  \ dz \dF\label{eq:bessel inequality}
\\
&= \lim_{\inr\to \infty}\frac{1}{4\inr}\int_{x\geq 0} \int_{z=0}^B K(x,\frac{z}{\inr}) \log\left( \frac{1 }{z}\right)  \ dz \dF\notag\\
&= \lim_{\inr\to \infty}\frac{1}{4\inr}\int_{x\geq 0} \int_{z=0}^1 K(x,\frac{z}{\inr}) \log\left( \frac{1 }{z}\right)  \ dz \dF+\lim_{\inr\to \infty}\frac{1}{4\inr}\int_{x\geq 0} \int_{z=1}^B K(x,\frac{z}{\inr}) \log\left( \frac{1 }{z}\right)  \ dz \dF\notag
\\
&\geq \lim_{\inr\to \infty}\frac{1}{4\inr}\int_{x\geq 0} \int_{z=1}^B K(x,\frac{z}{\inr}) \log\left( \frac{1 }{z}\right)  \ dz \dF\label{eq:B larger 1}\\
&\geq \lim_{\inr\to \infty}\frac{1}{4\inr} \int_{z=1}^B  \log\left( \frac{1 }{z}\right)  \ dz \label{eq:log less 1}\\
&=\lim_{\inr\to \infty}\frac{1}{\inr} \times \left(-z\log(z)+z \right)\big|_{z=1}^B=0,\notag
\end{align}
\end{subequations}
where~\eqref{eq:bessel inequality} is by the inequality
$I_0(x)\leq e^{x},$ and where~\eqref{eq:B larger 1} and~\eqref{eq:log less 1} are true for the choice of $B$.
Since both lower and upper bounds on~\eqref{eq:sandwich theorem} are zero, our claim follows.
\subsection{Justification of~\eqref{eq:sqrt and log expectation}: Calculation of $\lim_{\inr \to \infty}\mathbb{E}[\sqrt{\widetilde{Y}}]$}   \label{app:sqrt and log expectation}
Here we calculate the expected value of the output modulo given by 
\begin{align}
\lim_{\inr\to \infty}\mathbb{E}[\sqrt{\widetilde{Y}}]=\lim_{\inr\to \infty}\int_{x\geq 0}\int_{y\geq 0}\sqrt{y}K(x,y)\dy \dF\notag
=C_1+C_2,
\end{align}
where 
\begin{align*}
C_1&=\lim_{\inr\to \infty}\int_{x= 0}^A\int_{y\geq 0}\sqrt{y}K(x,y)\dy \dF,\\
C_2&=\lim_{\inr\to \infty}\int_{x\geq A}\int_{y\geq 0}\sqrt{y}K(x,y)\dy \dF,
\end{align*}
and where we take $A=\inr^{(1-\delta)}$ for some $\delta>0$.
\begin{align}
C_1&=\lim_{\inr\to \infty}\int_{x= 0}^A \int_{0}^{2\pi} \int_{y\geq 0}\sqrt{y}\kernel \dy \ d\theta \dF \notag
\\
&=\lim_{\inr\to \infty} \int_{x= 0}^A \int_0^{2\pi}e^{-(\xb)}\Gamma(3/2)\frac{e^{\xb}\sqrt{\xb}}{\Gamma(3/2)}\notag 
\\
&\hspace{6cm}\cdot \left[ 1+\frac{(\frac{1}{2})(\frac{1}{2})}{\xb}+O\left(\frac{1}{\inr}\right)\right]\ d\theta \dF \label{eq:expectation sqrt chi-square}\\
&=\lim_{\inr\to \infty}\int_{x=0}^A \int_0^{2\pi}\left(\sqrt{\xb}+\frac{1}{4\sqrt{\xb}}+O\left(\frac{1}{\inr}\right)\right) \ d\theta \dF\notag
\\
&= \lim_{\inr\to \infty}\int_{x=0}^A \left(\sqrt{\inr}+\frac{x+1}{4\sqrt{\inr}}+O\left(\frac{1}{\inr}\right) \right)\dF \label{eq: exp cos elip}\\
&=\sqrt{\inr}+\frac{\snr+1}{4\sqrt{\inr}}+O\left(\frac{1}{\inr}\right)\label{eq:DCT 0-A},
\end{align}
where~\eqref{eq:DCT 0-A} is by the Dominated Convergence Theorem.
The equality in~\eqref{eq:expectation sqrt chi-square} is due to~\cite[eq(6.631)]{table_of_integrals}
\begin{align*}
\int_{x\geq 0} \sqrt{x} e^{-\alpha x}I_0(2\beta \sqrt{x})dx= \frac{\Gamma(\frac{3}{2})}{\alpha^{3/2}} {\,}_1F_1\left( 3/2,1,\frac{\beta^2}{\alpha}\right),
\end{align*}
where ${\,}_1F_1(a,b,x)$ is the confluent hypergeometric function~\cite[Chapter 13]{olver_handbook}. The series expansion of ${\,}_1F_1(a,c,x)$  for $x\to \infty$ is given by~\cite[Section 13.7]{olver_handbook}
\begin{align*}
_1F_1( a,c,x)&=\frac{\Gamma(c) e^x x^{a-c}}{\Gamma(a)}\sum_{n=0}^\infty \frac{(c-a)_n(1-a)_n}{n!}x^{-n}\\
&= \frac{\Gamma(c) e^x x^{a-c}}{\Gamma(a)}\left[ 1+\frac{(c-a)(1-a)}{x}+O\left(\frac{1}{x^2}\right)\right],
\end{align*}
where
$(a)_n:=a(a+1)\ldots(a+n-1)$.

In addition, the equality in~\eqref{eq: exp cos elip} is justified by~\cite{table_of_integrals}
\begin{align*}
\hspace{-.5cm}2\pi\mathbb{E}_\theta\left[\frac{2\pi}{4(\sqrt{\xb})}\right]&=\frac{1}{2(\sqrt{\inr}-\sqrt{x})}{\bf K}\left(-\frac{4\sqrt{x\inr}}{x+\inr-2\sqrt{x\inr}}\right)+\frac{1}{2(\sqrt{\inr}+\sqrt{x})}{\bf K}\left(\frac{4\sqrt{x\inr}}{x+\inr+2\sqrt{x\inr}}\right) ,
\\
\hspace{-.5cm}2\pi\mathbb{E}_\theta\left[\sqrt{\xb}\right]&=2(\sqrt{\inr}-\sqrt{x}){\bf E}\left(-\frac{4\sqrt{x\inr}}{x+\inr-2\sqrt{x\inr}}\right)+2(\sqrt{\inr}+\sqrt{x}){\bf E}\left(\frac{4\sqrt{x\inr}}{x+\inr+2\sqrt{x\inr}}\right) ,
\end{align*}
where 
\begin{align*}
{\bf K}(k^2)&=\frac{\pi}{2} \sum_{n=0}^\infty \left[\frac{(2n-1)!!}{(2n)!!}\right]^2k^{2n}=\frac{\pi}{2}\left[ 1+\frac{1}{4}k^2+\frac{9}{64}k^4+\frac{25}{256}k^6+\ldots\right],\\
{\bf E}(k^2)&=\frac{\pi}{2}\left[1-\sum_{n=1}^\infty \left[\frac{(2n-1)!!}{(2n)!!}\right]^2\frac{k^{2n}}{2n-1}\right]=\frac{\pi}{2}\left[1-\frac{1}{4}k^2-\frac{3}{64}k^4-\frac{5}{128}k^6-\ldots\right],
\end{align*}
are respectively the complete elliptic integral of the first and second kind. Hence
\begin{align}
\mathbb{E}_\theta\left[\sqrt{\xb}\right]
&=\frac{1}{2}(\sqrt{\inr}-\sqrt{x})\left[1+\frac{\sqrt{x\inr}}{(\sqrt{\inr}-\sqrt{x})^2}-\frac{3}{4}\frac{x\inr}{(\sqrt{\inr}-\sqrt{x})^4}+\ldots\right]\notag
\\
&+\frac{1}{2}(\sqrt{\inr}+\sqrt{x})\left[1-\frac{\sqrt{x\inr}}{(\sqrt{\inr}+\sqrt{x})^2}-\frac{3}{4}\frac{x\inr}{(\sqrt{\inr}+\sqrt{x})^4}-\ldots\right]\notag
\\ 
&=\frac{1}{2}\left[\sqrt{\inr} -\sqrt{x} +\frac{\sqrt{x\inr}}{\sqrt{\inr}-\sqrt{x}}-\frac{3}{4}\frac{x\inr}{(\sqrt{\inr}-\sqrt{x})^3}+\ldots \right]\notag
\\
&+\frac{1}{2}\left[\sqrt{\inr} +\sqrt{x} -\frac{\sqrt{x\inr}}{\sqrt{\inr}+\sqrt{x}}-\frac{3}{4}\frac{x\inr}{(\sqrt{\inr}+\sqrt{x})^3} +\ldots\right]\notag
\\
&=\frac{1}{2}\left[\sqrt{\inr}+\frac{x}{\sqrt{\inr}-\sqrt{x}}-\frac{3}{4}\frac{x\inr}{(\inr\sqrt{\inr}-3\inr\sqrt{x}+3\sqrt{\inr}x-x\sqrt{x})}+\ldots\right]\notag
\\
&+\frac{1}{2}\left[\sqrt{\inr}+\frac{x}{\sqrt{\inr}+\sqrt{x}}-\frac{3}{4}\frac{x\inr}{(\inr\sqrt{\inr}+3\inr\sqrt{x}+3\sqrt{\inr}x+x\sqrt{x})} +\ldots \right]\notag
\\
&=\sqrt{\inr}+\frac{x}{4\sqrt{\inr}}+O\left(\frac{1}{\inr}\right)\label{eq: estimate1},
\end{align}
and
\begin{align}
\mathbb{E}_\theta\left[\frac{1}{4(\sqrt{\xb})}\right]
&=\frac{1}{4}\frac{1}{(\sqrt{\inr}-\sqrt{x})}\left[1-\frac{\sqrt{x\inr}}{(\sqrt{\inr}-\sqrt{x})^2}+\ldots\right] \notag
\\
&+\frac{1}{4}\frac{1}{(\sqrt{\inr}+\sqrt{x})}\left[1+\frac{\sqrt{x\inr}}{(\sqrt{\inr}+\sqrt{x})^2}+\ldots\right]\notag
\\ 
&=\frac{1}{4}\left[\frac{1}{\sqrt{\inr} -\sqrt{x}} -\frac{\sqrt{x\inr}}{(\sqrt{\inr}-\sqrt{x})^3}+\ldots \right]\notag
\\
&+\frac{1}{4}\left[\frac{1}{\sqrt{\inr} +\sqrt{x}} +\frac{\sqrt{x\inr}}{(\sqrt{\inr}+\sqrt{x})^3}+\ldots \right]\notag
\\
&=\frac{1}{4\sqrt{\inr}}+O\left(\frac{1}{\inr}\right).\label{eq: estimate2}
\end{align}
Finally, given~\eqref{eq: estimate1} and~\eqref{eq: estimate2}, for $x\ll \inr$ (which holds in the region $0\leq x\leq A, A=\inr^{(1-\delta)}$) we have
\begin{align*}
&\mathbb{E}_\theta\left[\sqrt{\xb}+\frac{1}{4(\sqrt{\xb})}\right]=\sqrt{\inr}+\frac{x+1}{4\sqrt{\inr}}+O\left(\frac{1}{\inr}\right).
\end{align*}
As the result, and since $C_2\geq 0$
\begin{align*}
\mathbb{E}[\sqrt{\widetilde{Y}}]\geq C_1= \sqrt{\inr}+\frac{\snr+1}{4\sqrt{\inr}}+O\left(\frac{1}{\inr}\right).
\end{align*}

\subsection{Justification of~\eqref{eq:sqrt and log expectation}: Calculation of $\lim_{\inr \to \infty}\mathbb{E}_{\theta,\widetilde{X}}\left[\log(\widetilde{X}+\inr+2\sqrt{ \widetilde{X}\inr}\cos \theta)\right]$}
\begin{align*}
\lim_{\inr\to \infty}\mathbb{E}_{\theta,\widetilde{X}}\left[\log(\widetilde{X}+\inr+2\sqrt{ \widetilde{X}\inr}\cos \theta)\right]&=\lim_{\inr\to \infty}\int_{x\geq 0}\int_0^{2\pi}\log(\xb)d\theta \dF\\
&=C_1+C_2,
\end{align*}
where 
\begin{align*}
C_1&=\lim_{\inr\to \infty}\int_{x= 0}^\inr\int_0^{2\pi}\log(\xb)\ d\theta \dF,\\
C_2&=\lim_{\inr\to \infty}\int_{x\geq \inr}\int_0^{2\pi}\log(\xb)\ d\theta \dF.
\end{align*}
To calculate $C_1$, we state the following lemma.
\begin{lemma}\label{lemma:log int}
\begin{align}
\int_0^{2\pi} \log(1+2r\cos(x)+r^2)\ dx=0,\ 0\leq r\leq 1
\end{align}
\end{lemma}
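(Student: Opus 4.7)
The plan is to reduce this lemma to a classical fact about harmonic functions. First I would recognize the integrand as the squared modulus of a complex expression:
\[
1+2r\cos(x)+r^2 = (1+re^{ix})(1+re^{-ix}) = |1+re^{ix}|^2,
\]
so the integral becomes $2\int_0^{2\pi}\log|1+re^{ix}|\,dx$. For $r\in[0,1)$, the map $z\mapsto\log|1+z|$ is harmonic on the closed disk of radius $r$ (since $1+z$ has no zeros for $|z|\le r<1$), so by the mean value property of harmonic functions,
\[
\frac{1}{2\pi}\int_0^{2\pi}\log|1+re^{ix}|\,dx \;=\; \log|1+0|\;=\;0,
\]
which proves the lemma on the open interval $0\le r<1$. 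An equivalent self-contained verification expands the principal branch as the uniformly convergent power series $\log(1+re^{ix})=\sum_{n\ge 1}\frac{(-1)^{n+1}}{n}r^n e^{inx}$ on $[0,2\pi]$, integrates term by term using $\int_0^{2\pi}e^{inx}dx=0$ for $n\ge 1$, and takes twice the real part.

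To extend the identity to the closed endpoint $r=1$, I would establish continuity of the integral $I(r):=\int_0^{2\pi}\log(1+2r\cos(x)+r^2)\,dx$ at $r=1$ by Dominated Convergence, then invoke $I(r)=0$ on $[0,1)$. The only obstacle is the logarithmic singularity of the integrand at $x=\pi$ when $r=1$; this requires an integrable dominating function valid uniformly for $r$ in a left neighborhood of $1$. I would obtain one from the identity
\[
1+2r\cos(x)+r^2 \;=\; (1-r)^2+4r\cos^2(x/2)\;\ge\;4r\cos^2(x/2),
\]
combined with the trivial upper bound $1+2r\cos(x)+r^2\le(1+r)^2\le 4$. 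For $r\in[1/2,1]$ these give
\[
\bigl|\log(1+2r\cos(x)+r^2)\bigr| \;\le\; 2\log 2+2\bigl|\log|\cos(x/2)|\bigr|,
\]
which is integrable on $[0,2\pi]$ since the singularity $\log|\cos(x/2)|$ at $x=\pi$ is integrable. DCT then transfers the value $0$ from the interior to the boundary.

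The main step (mean value property) is essentially immediate; the only place where real care is required is the dominating-function estimate for the boundary $r=1$. As a sanity check one can instead compute $r=1$ directly: $\log(2+2\cos x)=\log 4+2\log|\cos(x/2)|$, and $\int_0^{2\pi}\log|\cos(x/2)|\,dx = 4\int_0^{\pi/2}\log\cos u\,du = -2\pi\log 2$ by the classical Euler integral, which combines with $\int_0^{2\pi}\log 4\,dx = 4\pi\log 2$ to give $0$, confirming the limit.
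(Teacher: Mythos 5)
Your proof is correct. For $r\in[0,1)$ your argument is essentially the paper's: writing $1+2r\cos x+r^2=|1+re^{ix}|^2$ and invoking the mean value property of the harmonic function $\log|1+z|$ is the same computation as the paper's application of Cauchy's integral formula to $\log(1+z)$ — the mean value property is precisely the real part of Cauchy's formula evaluated at the center. Your power-series alternative is another rephrasing of the same idea. Where you genuinely diverge is at the endpoint $r=1$: the paper computes $\int_0^{2\pi}\log(2+2\cos x)\,dx$ directly, setting up a functional relation (via the $\sin 2\theta$ double-angle identity) that forces $\int_0^\pi\log|\cos\theta|\,d\theta=-\pi\log 2$, i.e.\ it re-derives the Euler log-cosine integral from scratch. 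You instead transfer the value $0$ from $r<1$ to $r=1$ by establishing continuity of the integral via dominated convergence, with the dominating function coming from the identity $1+2r\cos x+r^2=(1-r)^2+4r\cos^2(x/2)$; you only cite the Euler integral as an optional cross-check. Both routes are valid. The paper's is self-contained (it does not need an external evaluation of the Euler integral nor a DCT step), while yours cleanly separates the algebraic/analytic content (the interior identity) from the limiting content (the boundary), and would extend more readily to perturbations of the integrand where a closed-form boundary evaluation is unavailable.
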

\begin{proof}
Based on Cauchy's integral formula
\[f(a)=\frac{1}{2\pi j} \oint_\gamma \frac{f(z)}{z-a} dz,\]
for $0\leq r <1$, we can write
\begin{align*}
\int_0^{2\pi} \log(1+2r\cos(x)+r^2)\ dx&=\int_0^{2\pi} \log(1+re^{jx})\ dx+\int_0^{2\pi} \log(1+re^{-jx})\ dx\\
&=2\oint_\gamma \frac{\log(1+z)}{jz}\ dz=0.
\end{align*}
For $r=1$, we have
\begin{subequations}
\begin{align}
\int_0^{2\pi} \log(1+2r\cos(x)+r^2)\ dx&=4\pi\log(2)+4\int_0^\pi \log \left(\cos(\theta) \right)\ d\theta\label{eq:x}\\
&=4\pi \log(2) +4\int_0^{\frac{\pi}{2}}\log\left( \cos(\theta)\right)\ d\theta +4\int_0^{\frac{\pi}{2}}\log \left(\sin(\theta)\right)\ d\theta \notag\\
&=4\pi \log(2) +4\int_0^{\frac{\pi}{2}}\log \left(\frac{\sin(2\theta)}{2}\right)\ d\theta\notag\\
&=2\pi \log(2) +2\int_0^{\pi}\log \left(\sin(\theta)\right)\ d\theta\notag\\
&=2\pi \log(2) +2\int_0^{\pi}\log \left(\cos(\theta)\right)\ d\theta\label{eq:2x},
\end{align}
\end{subequations}
which according to~\eqref{eq:x} and~\eqref{eq:2x}, results in
\[4\pi\log(2)+4\int_0^\pi \log \left(\cos(\theta) \right)\ d\theta =2\pi \log(2) +2\int_0^{\pi}\log \left(\cos(\theta)\right)\ d\theta=0.\]
 %
\end{proof}
Based on lemma ~\ref{lemma:log int}, we see that
\begin{align*}
C_1&=\lim_{\inr\to \infty}\int_{x= 0}^\inr \int_0^{2\pi}\log(\xb)\ d\theta \dF,\\
&=\lim_{\inr\to \infty}\int_{x= 0}^\inr \log(\inr) \dF +\lim_{\inr\to \infty}\int_{x= 0}^\inr \int_0^{2\pi}\log(1+2\sqrt{\frac{x}{\inr}}\cos \theta+\frac{x}{\inr}) \ d\theta \dF\\
&=\log(\inr).
\end{align*}
In addition,
\begin{align*}
C_2&=\lim_{\inr\to \infty}\int_{x\geq \inr}\int_0^{2\pi}\log(\xb)\ d\theta \dF\\
&=\lim_{\inr\to \infty}\int_{x\geq \inr} \log(x) \dF \\
&\leq\lim_{\inr\to \infty}\int_{x\geq \inr} x \dF,
\end{align*}
which goes to zero as $\inr \to \infty$ by Dominated Convergence Theorem.
As the result
\[\lim_{\inr\to \infty}\mathbb{E}_{\theta,\widetilde{X}}\left[\log(\widetilde{X}+\inr+2\sqrt{ \widetilde{X}\inr}\cos \theta)\right] \to  \log\left(\inr\right).\]

\bibliography{refs}

\begin{thebibliography}{10}
\providecommand{\url}[1]{#1}
\csname url@samestyle\endcsname
\providecommand{\newblock}{\relax}
\providecommand{\bibinfo}[2]{#2}
\providecommand{\BIBentrySTDinterwordspacing}{\spaceskip=0pt\relax}
\providecommand{\BIBentryALTinterwordstretchfactor}{4}
\providecommand{\BIBentryALTinterwordspacing}{\spaceskip=\fontdimen2\font plus
\BIBentryALTinterwordstretchfactor\fontdimen3\font minus
  \fontdimen4\font\relax}
\providecommand{\BIBforeignlanguage}[2]{{%
\expandafter\ifx\csname l@#1\endcsname\relax
\typeout{** WARNING: IEEEtran.bst: No hyphenation pattern has been}%
\typeout{** loaded for the language `#1'. Using the pattern for}%
\typeout{** the default language instead.}%
\else
\language=\csname l@#1\endcsname
\fi
#2}}
\providecommand{\BIBdecl}{\relax}
\BIBdecl

\bibitem{shahi-allerton}
S.~Shahi, D.~Tuninetti, and N.~Devroye, ``On the capacity of the {AWGN} channel
  with additive radar interference,'' in \emph{2016 54nd Annual Allerton
  Conference on Communication, Control, and Computing (Allerton)}, Oct 2016.

\bibitem{Salim_Conf_RadarPMF_2016}
A.~Salim, D.~Tuninetti, N.~Devroye, and D.~Erricolo, ``Modeling the
  interference of pulsed radar signals at {OFDM}-based communications
  systems,'' \emph{submitted for publication}, 2016.

\bibitem{ihara}
S.~Ihara, ``On the capacity of channels with additive non-{G}aussian noise,''
  \emph{Information and Control}, vol.~37, no.~1, pp. 34 -- 39, 1978.

\bibitem{1302319}
R.~Zamir and U.~Erez, ``A {G}aussian input is not too bad,'' \emph{IEEE
  Transactions on Information Theory}, vol.~50, no.~6, pp. 1362--1367, June
  2004.

\bibitem{smith}
J.~G. Smith, ``The information capacity of amplitude- and variance-constrained
  scalar {G}aussian channels,'' \emph{Information and Control}, vol.~18, no.~3,
  pp. 203 -- 219, 1971.

\bibitem{bar-david}
S.~Shamai and I.~Bar-David, ``The capacity of average and peak-power-limited
  quadrature {G}aussian channels,'' \emph{IEEE Transactions on Information
  Theory}, vol.~41, no.~4, pp. 1060--1071, Jul 1995.

\bibitem{aslan}
A.~Tchamkerten, ``On the discreteness of capacity-achieving distributions,''
  \emph{IEEE Transactions on Information Theory}, vol.~50, no.~11, pp.
  2773--2778, Nov 2004.

\bibitem{meyn}
J.~Huang and S.~P. Meyn, ``Characterization and computation of optimal
  distributions for channel coding,'' \emph{IEEE Transactions on Information
  Theory}, vol.~51, no.~7, pp. 2336--2351, July 2005.

\bibitem{faycal}
J.~Fahs, N.~Ajeeb, and I.~Abou-Faycal, ``The capacity of average power
  constrained additive non-{G}aussian noise channels,'' in
  \emph{Telecommunications (ICT), 2012 19th International Conference on}, April
  2012, pp. 1--6.

\bibitem{boche}
J.~Sommerfeld, I.~Bjelakovic, and H.~Boche, ``On the boundedness of the support
  of optimal input measures for {R}ayleigh fading channels,'' in \emph{2008
  IEEE International Symposium on Information Theory}, July 2008, pp.
  1208--1212.

\bibitem{NN-RadarConf2016}
N.~Nartasilpa, D.~Tuninetti, N.~Devroye, and D.~Erricolo, ``{Let's Share
  CommRad: Effect of Radar Interference on Uncoded Data Communication
  Systems},'' in \emph{IEEE Radar Conference (RadarCon)}, Philadelphia, May
  2016.

\bibitem{NN-Globecom2016}
------, ``On the error rate of a communication system suffering from additive
  radar interference,'' in \emph{Proc. of IEEE Global Communicaitons Conference
  (GlobeCom)}, Washington, DC, Dec 2016.

\bibitem{faycal-shamai}
I.~C. Abou-Faycal, M.~D. Trott, and S.~Shamai, ``The capacity of discrete-time
  memoryless {R}ayleigh-fading channels,'' \emph{IEEE Transactions on
  Information Theory}, vol.~47, no.~4, pp. 1290--1301, May 2001.

\bibitem{Lang-complex}
S.~Lang, ``Complex analysis (graduate texts in mathematics),'' 1993.

\bibitem{lapidoth}
A.~Lapidoth, ``On phase noise channels at high {SNR},'' in \emph{Information
  Theory Workshop, 2002. Proceedings of the 2002 IEEE}, Oct 2002, pp. 1--4.

\bibitem{resnick2013probability}
S.~I. Resnick, \emph{A probability path}.\hskip 1em plus 0.5em minus
  0.4em\relax Springer Science \& Business Media, 2013.

\bibitem{gursoytechnical}
M.~C. Gursoy, H.~V. Poor, and S.~Verdu, ``The capacity of the noncoherent
  {R}ician fading channel,''
  \url{http://ecs.syr.edu/faculty/gursoy/files/Techreport_revised.pdf}, 2002,
  princeton University Technical Report.

\bibitem{katz}
M.~Katz and S.~Shamai, ``On the capacity-achieving distribution of the
  discrete-time noncoherent and partially coherent {AWGN} channels,''
  \emph{IEEE Transactions on Information Theory}, vol.~50, no.~10, pp.
  2257--2270, Oct 2004.

\bibitem{Katz-thesis}
M.~Katz, ``On the capacity-achieving distribution of the discrete-time
  non-coherent and partially-coherent additive white {G}aussian noise
  channels,'' Master's thesis, Technion-Israel Institute of Technology, Haifa,
  Israel, 2002.

\bibitem{guo-lemma}
L.~Zhang, H.~Li, and D.~Guo, ``Capacity of {G}aussian channels with duty cycle
  and power constraints,'' \emph{IEEE Transactions on Information Theory},
  vol.~60, no.~3, pp. 1615--1629, March 2014.

\bibitem{envelope-theorem}
D.~G. Luenberger, \emph{Optimization by vector space methods}.\hskip 1em plus
  0.5em minus 0.4em\relax John Wiley \& Sons, 1969.

\bibitem{table_of_integrals}
I.~S. Gradshteyn and I.~M. Ryzhik, \emph{Table of integrals, series, and
  products}.\hskip 1em plus 0.5em minus 0.4em\relax Academic press, 2007.

\bibitem{hankel}
L.~Debnath and D.~Bhatta, \emph{Integral transforms and their
  applications}.\hskip 1em plus 0.5em minus 0.4em\relax CRC press, 2014.

\bibitem{olver_handbook}
F.~W. Olver, \emph{{NIST} Handbook of Mathematical Functions}.\hskip 1em plus
  0.5em minus 0.4em\relax Cambridge University Press, 2010.

\end{thebibliography}
\bibliographystyle{IEEEtran}

\end{document}